\documentclass[a4paper]{article}
\usepackage{amsfonts,amssymb,amsmath,amsthm,cite}
\usepackage{ucs} 
\usepackage[utf8x]{inputenc}
\usepackage{graphicx}
\usepackage[english]{babel}
\usepackage{slashed}
\usepackage{textcomp}
\usepackage{bm}
\usepackage{titlesec}
\usepackage{tikz-cd}
\usepackage{adjustbox}
\usepackage{multirow}
\usepackage{etoolbox}
\patchcmd{\thebibliography}{\section*}{\section}{}{}
\usepackage{mathtools}
\mathtoolsset{showonlyrefs}
\evensidemargin=\oddsidemargin
\newtheorem{theorem}{Theorem}
\newtheorem{lemma}{Lemma}
\newtheorem{cor}{Corollary}

\begin{document}
\vspace{15mm}
\begin{center}
	\large{\textbf{Quasi-local probability averaging \\in the context of cutoff regularization}}
\end{center}
\vspace{2mm}
\begin{center}
	\large{\textbf{A. V. Ivanov${}^{\dagger}$~~~I. V. Korenev${}^{\star}$}}
\end{center}
\begin{center}
	${}^{\dagger}$St. Petersburg Department
	of Steklov Mathematical Institute
	of Russian Academy of Sciences,\\
	27 Fontanka, St. Petersburg, 191023, Russia
\end{center}
\begin{center}
	${}^{\dagger}$Saint Petersburg State University,\\ 	7/9 Universitetskaya Emb., St. Petersburg, 199034, Russia
\end{center}
\begin{center}
	${}^{\star}$National Research University "Higher School of Economics", Faculty of Mathematics,\\6 Usacheva St., Moscow, 119048, Russia
\end{center}

\begin{center}
	${}^{\dagger}$E-mail: regul1@mail.ru\,\,\,
	${}^{\star}$E-mail: jacepool332@gmail.com
\end{center}

\vspace{15mm}

\textbf{Abstract.} In this paper, we study the properties of averaged fundamental solutions of a special type for Laplace operators in the Euclidean space of an arbitrary dimension. We consider a class of kernels suitable for probabilistic averaging, and propose new representations for the deformed fundamental solutions and their values at zero. In addition, we give examples related to specific quantum field models in the context of studying renormalization properties.

\vspace{2mm}
\textbf{Key words and phrases:} regularization, cutoff, averaging, Green's function, fundamental solution, Laplace operator, smoothing, scalar model, sigma model.

\newpage

\tableofcontents
	
\section{Introduction}
\noindent\textbf{Main objects.} Consider the Euclidean space $\mathbb{R}^n$, where $n\in\mathbb{N}$, and introduce the Laplace operator in the Cartesian coordinates $A_n(x)=-\sum_{i=1}^n\partial_{x_i}\partial_{x_i}$. Next, let $r>0$, then we can define the set of functions
\begin{equation}\label{35-02}
G_2(r)=-\frac{\ln(r)}{2\pi}\,\,\,\mbox{and}\,\,\,
G_n(r)=\frac{r^{2-n}}{(n-2)S_{n-1}}\,\,\,\mbox{for}\,\,\,n\neq2,
\end{equation}
where $S_{n-1}=2\pi^{n/2}/\Gamma(n/2)$ denotes the area of the unit sphere in $\mathbb{R}^n$. In particular, $S_0=2$. It is known, see section 3.1 in \cite{VZ}, that $G_n(\cdot)$ is a fundamental solution for the operator $A_n(x)$, which should be understood as a solution to the equation $A_n(x)G_n(|x|)=\delta^n(x)$ in the sense of generalized functions on the Schwartz class $\mathcal{S}(\mathbb{R}^n)$, see \cite{Gelfand-1964,Vladimirov-2002}, where $\delta^{n}(\cdot)$ is the corresponding $\delta$-functional.\\ 

\noindent\textbf{Motivation.} The notable position of the presented functions in theoretical physics is due to the fact, see \cite{29,35-DW, 30-1-1}, that they are the main approximation for Green's functions near the diagonal on smooth compact Riemannian manifolds. In particular, it is for this reason that they often arise in the study of perturbative decompositions for quantum field models \cite{3,7,10}. As is known, within the framework of this approach, non-linear combinations of Green's functions and their derivatives can appear, therefore, the regularization of Green's functions, which consists in deforming certain parts of it, plays a separate important role. This paper focuses on studying some important properties from a physical point of view for a special deformation of the functions $G_n(|\cdot|)$, see \cite{Iv-2024,Ivanov-2022,sk-b-20}, which can be obtained by applying the averaging operator twice.\\ 

\noindent\textbf{Averaging operator.}  Let us introduce the main elements of the averaging operator $O$. Let $\Lambda>0$ and a function $\omega(|\cdot|)\in C^{\infty}(\mathbb{R}^n,\mathbb{R})$ with the properties
\begin{equation}\label{35-04}
\mathrm{supp}(\omega)\subset[0,1/2],\,\,\,
\int_{\mathbb{R}^n}\mathrm{d}^nx\,\omega(|x|)=1.
\end{equation}
Then we will assume that the averaging operator $O_x^\Lambda$, where the index "$x$" denotes the variable for which averaging is performed, and the parameter $\Lambda$ is responsible for the averaging radius, acts on the function $\phi(\cdot)\in L_{1,\mathrm{loc}}(\mathbb{R}^n)$ according to the following rule
\begin{equation*}
O_x^\Lambda\phi(x)=\int_{\mathrm{B}_{1/2}}\mathrm{d}^ny\,\phi(x+y/\Lambda)\omega(|x|).
\end{equation*}
Here, $\mathrm{B}_{1/2}\subset\mathbb{R}^n$ denotes a closed ball of radius $1/2$ centered at the origin. In physics, the parameter $\Lambda$ is called regularizing, and the process of removing regularization is the transition to the limit of $\Lambda\to+\infty$. Indeed, in this case, $O_x^\Lambda\phi(x)\to\phi(x)$ due to the properties of \eqref{35-04}.\\

\noindent\textbf{Regularization.} Before defining the regularized (deformed) Green's function, we need to explain the appearance of double averaging in applications. It is known, see \cite{Bog-R}, that the coefficients of formal series in the framework of perturbative quantum field theory are obtained by calculating Gaussian integrals from polynomials, which ultimately boil down to repeated applications of Wick's theorem on pairings. Roughly speaking, the field functions responsible for fluctuations near a solution of the equation of motion are replaced in pairs by the corresponding Green's functions. In the main order, this can be written as follows
\begin{equation*}
\phi(x)\phi(y)\to G_n(x-y).
\end{equation*}
One of the recent approaches to the regularization of quantum field systems is the quasi-local probability averaging of fluctuation fields, see \cite{sksk}, that is, the transition $\phi(x)\to O_x^\Lambda\phi(x)$. Although the introduction of regularization using a smoothing integral operator is not original, see, for example, a cutoff in the momentum representation \cite{nkf-1,nkf-2,nkf-3} within the framework of the functional renormalization group or the method with higher covariant derivatives \cite{AA-2,29-st,Bakeyev-Slavnov}, the approach with averaging over a "small" neighborhood is new. Its systematic study was initiated in the study of the cubic model \cite{34} and significantly improved in subsequent works, see \cite{sksk,Iv-2024-1} and references therein. The word "quasi-locality" in this case is associated with the presence of a small area (a ball of radius $1/(2\Lambda)$) near a fixed point $x$ for averaging. After such a transformation, the Wick's theorem on pairing works according to the rule
\begin{equation*}
O_x^\Lambda\phi(x)O_y^\Lambda\phi(y)\to O_x^\Lambda O_y^\Lambda G_n(x-y)\equiv G_{n,\omega}^{\Lambda}(|x-y|)
\end{equation*}
and, thus, leads to the appearance of the double averaging. Next, by quasi-local averaging of the fundamental solution $G_n(|\cdot|)$, we name the transformation of the form
\begin{equation}\label{35-05}
G_n(|x|)\to
G_{n,\omega}^{\Lambda}(|x|)=\int_{\mathbb{R}^n}\mathrm{d}^ny\int_{\mathbb{R}^n}\mathrm{d}^nz\,
G_n(|x+y/\Lambda+z/\Lambda|)\omega(|y|)\omega(|z|).
\end{equation}
In this case, we call quasi-local averaging probabilistic if, in addition to the properties of \eqref{35-04}, $\omega(\cdot)\geqslant0$ is valid almost everywhere.\\

\noindent\textbf{About representations.} Note that it is sufficient to define the deformed function only for $\Lambda=1$, since the following large-scale transformations are valid for the initial functions \eqref{35-02}
\begin{equation}\label{35-06}
G_2(r/\Lambda)=G_2(r)+\frac{\ln(\Lambda)}{2\pi}\,\,\,\mbox{and}\,\,\,
G_n(r/\Lambda)=\Lambda^{2-n}G_n(r)\,\,\,\mbox{for}\,\,\,n\neq2,
\end{equation}
which are preserved when switching to the regularized objects. Therefore, further, assuming $\Lambda=1$, we will omit the corresponding index. It was previously shown, see Section 3 in \cite{sksk}, that the deformed function $G_{n,\omega}^{\Lambda}(|\cdot|)$ admits the following representation
\begin{equation}\label{35-07}
G_{n,\omega}(|x|)=G_{n}^{\mathbf{f}}(|x|)=
\mathbf{f}_n(|x|^2)+
\begin{cases}
\,\,G_n(1),&|x|\leqslant1; \\
G_n(|x|),&|x|>1,
\end{cases}
\end{equation}
where the auxiliary function $\mathbf{f}_n(\cdot)$ has the set of properties
\begin{equation}\label{35-08}
	\mathbf{f}_n(\cdot)\in C([0,+\infty),\mathbb{R}),\,\,\,
	\mathrm{supp}(\mathbf{f}_n)\subset[0,1],\,\,\,
	\mathbf{f}_n(1)=0.
\end{equation}

\noindent\textbf{Structure of the work.} This article consists of an introduction, three main sections and a conclusion. Here is a brief description of the sections and the corresponding results.\\

\noindent ---  Section \ref{35:sec:re}  contains the basic properties of the function \eqref{35-05} and kernels of the averaging operator, which are often used in practice, for example, in the framework of perturbative quantum field theory in the search for coefficients of renormalization constants. Theorem \ref{35-t-1} includes an explicit representation, see \eqref{35-01}, for double averaging of the function $G_n(|\cdot|)$ over spheres of arbitrary radius, as well as its basic properties such as smoothness and monotonicity intervals. Note that some intermediate results related to the theory of double averaging have appeared earlier, see for example section 1.5 in \cite{35-cit-3}, devoted to convolutions of $\delta$-functionals. Nevertheless, in the context of regularization, this topic is new, continuing the series of papers \cite{Ivanov-2022,Iv-2024}, therefore the theorem is useful from the point of view of further applications. Theorem \ref{35-t-2} is devoted to an example of a suitable class of kernels $\omega(|\cdot|)$, as well as to the derivation of representations for some special functionals.\\

\noindent ---  Section \ref{35:sec:pr} provides proofs for the statements made in the previous section.\\

\noindent ---  Section \ref{35:sec:sp} examines new examples that are important from the point of view of applications in the framework of perturbative quantum field theory and renormalization theory. The section consists of three parts. 
\begin{itemize}
	\item[(\textit{i})] Section \ref{sec:35:1} provides an example with double averaging over spheres of the same radius. This choice is very popular in scalar models, see \cite{Iv-2024-1} and the corresponding references in the introduction. Moreover, it represents an extreme case, namely, it minimizes the maximum value of the deformed Green's function, and thus leads to the extreme case of the renormalized mass.
	\item[(\textit{ii})] Section \ref{sec:35:2} provides an explicit calculation for a family of kernels in three-dimensional space. This result is important in the context of studying the properties of the sextic model \cite{29-3} with the cutoff in the coordinate representation, see \cite{Kh-2024,Kh-25}, since the free parameter continuously connects the extreme case with the limiting situation of regularization removal. This fact gives an additional degree of freedom when fixing the renormalization coefficients.
	\item[(\textit{iii})] Section \ref{sec:35:3} discusses a family of kernels in the two-dimensional case, which are an example of mixed regularization, that is, combining cutoffs in the coordinate and momentum representations. Using the example of the two-dimensional non-linear sigma model \cite{sig1} with the cutoff regularization, see \cite{sk-b-19,AIK-25,Ivanov-Akac}, it is shown that this approach leads to additional freedom, allowing us to make special-type functionals (functions $\theta_i$ from \cite{AIK-25}) equal to zero.
\end{itemize}

\noindent ---  Section \ref{35:sec:zak} contains concluding remarks and open questions.

\section{Main properties}\label{35:sec:re}
\begin{theorem}\label{35-t-1} Let $n\in\mathbb{N}$ and $x,y,z\in\mathbb{R}^n$. Let us introduce notation for the absolute values of $r=|x|$, $s=|y|$, and $t=|z|$. The symbol $\mathrm{S}^{n-1}$ denotes the unit $(n-1)$-dimensional sphere centered at the origin. Consider an integral of the form
\begin{equation}\label{35-20}
k_n(r,s,t)=
\int_{\mathrm{S}^{n-1}}\frac{\mathrm{d}^{n-1}\sigma(\hat{y})}{S_{n-1}}\int_{\mathrm{S}^{n-1}}\frac{\mathrm{d}^{n-1}\sigma(\hat{z})}{S_{n-1}}\,
G_n(|x+s\hat{y}+t\hat{z}|),
\end{equation}
where $\hat{y}=y/|y|$, $\hat{z}=z/|z|$, and also $\mathrm{d}^{n-1}\sigma(\hat{y})$ denotes the standard measure on the sphere, which is normalized by the number $S_{n-1}$ and, at $n=1$, degenerates into the summation over the set $\{\pm1\}$. Taking into account all the above, the relation holds
\begin{equation}\label{35-01}
k_n(r,s,t)=
\begin{cases}
G_n(\mathrm{max}(t,s)),	& r<|t-s|;  \\
G_n(t+s)+g_n(r,s,t),
	& |t-s|\leqslant r\leqslant|t+s|; \\
G_n(r),	& r>|t+s|,
\end{cases}
\end{equation}
where
\begin{equation}\label{35-03}
g_n(r,s,t)=
\frac{2^{n-2}S_{n-2}}{S_{n-1}^2}
\int_r^{t+s}\mathrm{d}u\,u^{1-n}\int_0^{(u^2-(t-s)^2)/(4ts)}\mathrm{d}p\,p^{\frac{n-3}{2}}(1-p)^{\frac{n-3}{2}}
\end{equation}
for $n>1$ and $g_1(r,s,t)=(r-t-s)/2$. In this case, the following four sets of properties are valid.\\

\noindent1) The function $k_n(r,s,t)$ is continuous, symmetric, and decreasing with respect to each argument on the set $\mathbb{R}_{3}=\mathbb{R}_+^3\setminus\{0,0,0\}$ for all $n\in\mathbb{N}$. Moreover, the ratio is correct
%	\begin{align}
\begin{equation}
	\label{35-21}
	k_n(r,s,t)>0\,\,\,\mbox{for}\,\,\,n\geqslant3\,\,\,\mbox{and}\,\,\,(r,s,t)\in\mathbb{R}_3.
\end{equation}
%\\\label{35-22}
%	k_2(r,s,t)\geqslant0\,\,\,&\mbox{для}\,\,\,(r,s,t)\in\big(\mathbb{R}_+\times[0,1/2]\times[0,1/2]\big)\cap\mathbb{R}_3.
%\end{align}
\noindent2) For fixed  $(s,t)\in\mathbb{R}_+^2\setminus\{0,0\}$ and $n\geqslant3$ the function $k_n(\cdot,s,t)$ reaches its maximum at $\mathbb{R}_+$ at point $0$, which is
\begin{equation}
k_n(0,s,t)=G_n(\mathrm{max}(t,s)).
\end{equation}

\noindent3) Let $(s,t)\in\mathbb{R}_+^2\setminus\{0,0\}$ are fixed and $n\geqslant2$, then, with respect to the variable $r$, the function $\partial_rk_n(r,s,t)$ is decreasing and continuous on $\mathbb{R}_+$. In addition, the relations are correct
		\begin{align}\label{35-23}
		\partial_rk_n(r,s,t)=0\,\,\,&\mbox{for}\,\,\,r\leqslant|t-s|,
		\\\label{35-24}
		\partial_rk_n(r,s,t)<0\,\,\,&\mbox{for}\,\,\,r>|t-s|.
	\end{align}
In the case of $n=1$, the function has discontinuities of the type "jump" for $r=|t-s|$ and $r=|t+s|$.\\

\noindent4) The function $A_n(x)k_n(r,s,t)$ is symmetric on $\mathbb{R}_3$. At the same time, for $n\geqslant4$ the function is continuous on $\mathbb{R}_3$, and for $n=3,2$ and all fixed $(s,t)\in\mathbb{R}_+^2\setminus\{0,0\}$ the function has break points\footnote{See Paragraph 2 of Chapter 2 in \cite{34-f-1}.} in $r\in\{|t-s|,|t+s|\}$, of the first kind for $n=3$ and of the second kind for $n=2$, and is continuous on the set $\mathbb{R}_+\setminus\{|t-s|,|t+s|\}$. 
In addition, for $n\geqslant2$, the relations are correct
\begin{align}\label{35-25}
	A_n(x)k_n(r,s,t)=0\,\,\,&\mbox{for}\,\,\,r\in\mathbb{R}_+\setminus[|t-s|,|t+s|],
	\\\label{35-26}
	A_n(x)k_n(r,s,t)<0\,\,\,&\mbox{for}\,\,\,r\in(|t-s|,|t+s|).
\end{align}
\end{theorem}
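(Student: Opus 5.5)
The plan is to reduce the double spherical average to a single one and use the mean value property of harmonic functions. First average over $\hat y$ only. By Newton's theorem, i.e. the mean value property for the harmonic function $G_n$ together with the symmetry $A_n G_n=\delta^n$, one has
\begin{equation*}
\int_{\mathrm{S}^{n-1}}\frac{\mathrm{d}^{n-1}\sigma(\hat{y})}{S_{n-1}}\,G_n(|w+s\hat{y}|)=G_n(\max(|w|,s))
\end{equation*}
for $w\neq0$ or $s>0$. For $n=1$ the average is the sum over $\pm1$ and the formula is checked directly.

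Applying this with $w=x+t\hat z$ gives
\begin{equation*}
k_n(r,s,t)=\int_{\mathrm{S}^{n-1}}\frac{\mathrm{d}^{n-1}\sigma(\hat z)}{S_{n-1}}\,G_n\big(\max(|x+t\hat z|,s)\big).
\end{equation*}
Next use the pushforward of the normalized sphere measure under $\hat z\mapsto u=|x+t\hat z|$. With $\cos\theta$ distributed with density $\frac{S_{n-2}}{S_{n-1}}(1-c^2)^{(n-3)/2}$ and $u^2=r^2+t^2+2rtc$, the variable $u$ ranges over $[|r-t|,r+t]$.

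The three cases of \eqref{35-01} then follow.
\begin{itemize}
\item If $r>t+s$, then always $u>s$ is not guaranteed, so I instead use the symmetry of the original integral under the exchange $(r)\leftrightarrow$ the averaging radii. It is cleaner to note that $k_n$ is the average of $G_n$ over a compactly supported rotation-invariant measure of radius $\le t+s$, which gives $G_n(r)$ outside by the mean value property.
\item If $r<|t-s|$, then $|x+s\hat y+t\hat z|$ is never zero, and one obtains $G_n(\max(t,s))$. Indeed, averaging first over the larger sphere with the smaller radius inside gives $\max$.
\end{itemize}

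For the middle region, I will write $k_n-G_n(t+s)=-\int_r^{t+s}\partial_u k_n\,\mathrm{d}u$ and compute $\partial_r k_n$ by Gauss's theorem. The derivative $\partial_r$ of the spherical average over $|x|=r$ equals $\frac{1}{S_{n-1}r^{n-1}}$ times the flux, hence equals $-\frac{1}{S_{n-1}r^{n-1}}\mu(\mathrm{B}_r)$. Here $\mu$ is the probability measure of $s\hat y+t\hat z$, since $A_nG_n=\delta$. Computing $\mu(\mathrm{B}_u)$ as the spherical-cap measure $\{\cos\theta\le (u^2-t^2-s^2)/(2ts)\}$ and substituting $p=(1+\cos\theta)/2$ produces exactly the inner integral $2^{n-2}\frac{S_{n-2}}{S_{n-1}}\int_0^{(u^2-(t-s)^2)/(4ts)}p^{\frac{n-3}{2}}(1-p)^{\frac{n-3}{2}}\mathrm{d}p$. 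This yields \eqref{35-03}. Continuity at $r=t+s$ and $r=|t-s|$ serves as a check, the latter requiring $\mu(\mathrm{B}_u)=0$ for $u<|t-s|$. For $n=1$ everything is a finite sum.

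Properties 1--4 then come from the representation $\partial_rk_n=-\mu(\mathrm{B}_r)/(S_{n-1}r^{n-1})$.
\begin{itemize}
\item It vanishes for $r\le|t-s|$ and is negative beyond.
\item It is continuous for $n\ge2$, because $\mu$ has no atoms; for $n=1$ the atoms at $|t\pm s|$ give jumps.
\item It is decreasing in $r$, because $\mu(\mathrm{B}_r)r^{1-n}$ is increasing; this is the main obstacle. The natural route is that $\mu(\mathrm{B}_r)/|\mathrm{B}_r|$ is nondecreasing, which needs the density of $\mu$ in $|w|$ to grow fast enough. I would verify this from the explicit density $\propto u^{1-n}\cdot u\,(\ldots)^{(n-3)/2}$ over $[|t-s|,t+s]$. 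Alternatively I would compute $A_nk_n=-r^{1-n}\partial_r(r^{n-1}\partial_rk_n)=-\rho(r)/S_{n-1}$ times $r^{1-n}$ with $\rho\ge0$ the radial density. This gives \eqref{35-25}, \eqref{35-26} directly, and the discontinuities at $|t\pm s|$ follow from the behavior $(\ldots)^{(n-3)/2}$ of $\rho$ at the endpoints: a jump for $n=3$, a blow-up for $n=2$, continuity for $n\ge4$.
\end{itemize}

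Symmetry in $s,t$ is clear from the definition. Symmetry in all three arguments follows from the fact that the formula depends only on whether $r,s,t$ form a triangle, which I would confirm by rewriting $g_n$. Positivity for $n\ge3$ is inherited from $G_n>0$, and monotonicity in $s,t$ follows from symmetry.
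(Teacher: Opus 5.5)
Your derivation of \eqref{35-01}--\eqref{35-03} for $n\geqslant2$ is correct, and it takes a genuinely different route from the paper. The paper applies $A_n(x)$ to $k_n$ and evaluates the averaged $\delta^n$ by delta-function calculus on the sphere to get \eqref{35-18}. It then integrates the radial equation twice, using gluing conditions and $k_n=G_n(r)$ for $r>t+s$, and has to check continuity at $r=|t-s|$ separately through beta-function and ${}_2\mathrm{F}_1$ identities. You get the two outer regions directly from Newton's theorem, and the middle one from $\partial_rk_n=-r^{1-n}\mu(\mathrm{B}_r)/S_{n-1}$ with $\mu(\mathrm{B}_r)$ a spherical-cap probability. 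This identifies $h_n$ in \eqref{35-e-5} with $\mu(\mathrm{B}_r)$. As a result, $h_n(|t-s|)=0$, $h_n(t+s)=1$ and the first identity in \eqref{35-19} come for free, with no hypergeometric functions.

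Two small remarks on this part:
\begin{itemize}
\item For $r>t+s$ you do have $|x+t\hat z|\geqslant r-t>s$, so the direct argument works.
\item Symmetry in all three variables is immediate once you also average over $\hat x$, as the paper does. Then $k_n$ is the average of $G_n(|r\hat x+s\hat y+t\hat z|)$ over three independent directions, and no rewriting of $g_n$ is needed.
\end{itemize}

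The real problem is in properties 3 and 4, where your plan tries to prove claims that are false as printed.

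Your step that $r^{1-n}\mu(\mathrm{B}_r)$ is increasing (via $\mu(\mathrm{B}_r)/|\mathrm{B}_r|$ nondecreasing) cannot be carried out, because it is false. For $r>t+s$ one has $\mu(\mathrm{B}_r)=1$, so $\partial_rk_n=-r^{1-n}/S_{n-1}$ is strictly increasing for every $n\geqslant2$. For $n\geqslant4$ it fails already inside $(|t-s|,t+s)$: since $\rho\to0$ at $t+s$, you get $\partial_r\big(r^{1-n}\mu(\mathrm{B}_r)\big)\to-(n-1)(t+s)^{-n}<0$. The paper's proof merely asserts that $\partial_rk_n$ is decreasing, and its own \eqref{35-e-4} contradicts this for $r>t+s$. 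What your representation does prove, and what should replace that claim, is that $r^{n-1}\partial_rk_n=-\mu(\mathrm{B}_r)/S_{n-1}$ is non-increasing.

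There is also a sign error. In fact $-r^{1-n}\partial_r\big(r^{n-1}\partial_rk_n\big)=+r^{1-n}\rho(r)/S_{n-1}$: $A_n(x)k_n$ is just the density of $\mu$ (this is \eqref{35-10}), hence positive on $(|t-s|,t+s)$. This agrees with \eqref{35-18} and with $A_3k_3=1/(8\pi tsr)$ in Section~\ref{sec:35:2}. So \eqref{35-26} has the wrong sign, and you reach it only through a dropped minus sign.

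Two further points are hidden in your sketch.
\begin{itemize}
\item For $n=1$, your own formula (with $\mu([-r,r])=1/2$ on $(|t-s|,t+s)$), or the finite sum, gives $k_1=-(r+s+t)/4$ there, i.e.\ $g_1=(t+s-r)/4$. The printed $g_1=(r-t-s)/2$ is not even continuous at $r=|t-s|$.
\item On the diagonal $s=t$ one has $\mu(\mathrm{B}_r)\asymp r^{n-1}$ as $r\to0$. Hence $\partial_rk_n(0^+,t,t)\neq0$ (for $n=3$ it equals $-1/(16\pi t^2)$), and $A_n(x)k_n\sim c/r$ for every $n\geqslant2$. Thus \eqref{35-23} at $r=0$, and the continuity of $A_n(x)k_n$ for $n\geqslant4$, require $s\neq t$.
\end{itemize}
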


\begin{theorem}\label{35-t-2} Let the assumptions from Theorem \ref{35-t-1} be fulfilled, $\Omega=(0,1/2)$, $n\in\mathbb{N}$, and a function\footnote{Note that the multiplier $t^{-(n-\delta_{1n})/2+\nu_n}$ originally appeared for convenience reasons when describing a class of valid kernels for which the transition from \eqref{35-e-2} to \eqref{35-e-3} is possible.} is defined as $\omega(t)=t^{-(n-\delta_{1n})/2+\nu_n}w(t)$, where the parameter $\nu_n>0$ and
\begin{equation*}
w\in L_{1,+}(\Omega)=\{u\in L_1(\Omega): u\geqslant0\,\,\mbox{almost everywhere}\}.
\end{equation*}
Then the deformed function \eqref{35-05} allows a representation of the form
\begin{equation}\label{35-e-12}
G_{n,\omega}(r)=S^2_{n-1}\int_0^{1/2}\mathrm{d}s\int_0^{1/2}\mathrm{d}t\,s^{n-1}t^{n-1}
\omega(s)\omega(t)k_n(r,s,t),
\end{equation}
where $x\in\mathbb{R}^n$ and $|x|=r$, and the following four statements are true.\\
	
\noindent1) The function $G_{n,\omega}(\cdot)$ is continuous and decreasing on $\mathbb{R}_+$. At the same time, for $n\geqslant3$, it is bounded from below by zero, that is, $0<G_{n,\omega}(r)$ for all $r\in\mathbb{R}_+$. In addition, for $n\neq2$, formula \eqref{35-e-12} is valid for the extreme case $\nu_n\to+0$. The maximum is reached at the point $r=0$ and is equal to
\begin{equation}\label{35-e-2}
G_{n,\omega}(0)=2S^2_{n-1}\int_0^{1/2}\mathrm{d}s\,s^{n-1}G_n(s)\omega(s)\int_0^s\mathrm{d}t\,t^{n-1}\omega(t)\equiv\mathrm{I}_1[w].
\end{equation}
\noindent2) For $\nu_n>0$ the value of $G_{n,\omega}(0)$ can be rewritten in the equivalent form
\begin{equation}\label{35-e-3}
G_{n,\omega}(0)=G_n(1/2)+S_{n-1}\int_0^{1/2}\mathrm{d}s\,s^{1-n}\bigg(\int_0^s\mathrm{d}t\,t^{n-1}\omega(t)\bigg)^2\equiv\mathrm{I}_2[w].
\end{equation}
Moreover, in the case of $n=1$, the formula is also true for $\nu_1=0$.\\

\noindent3) If $\nu_n\geqslant1/2$ for all $n\geqslant2$ or $\nu_1\geqslant0$, then the function $\partial_rG_{n,\omega}(r)$ is continuous on $\mathbb{R}_+$ and has two properties
\begin{equation}\label{35-e-9}
\partial_rG_{n,\omega}(r)\Big|_{r=0}=0
,\,\,\,
\partial_rG_{n,\omega}(r)\Big|_{r\geqslant1}=-\frac{r^{1-n}}{S_{n-1}}.
\end{equation}

\noindent4) If $\nu_n\geqslant1/2$ for all $n\geqslant2$ or $\nu_1\geqslant0$, and also $w\in L_2(\Omega)$, then the function $A_n(x)G_{n,\omega}(r)$ is continuous on $\mathbb{R}_+$ and has two properties
\begin{equation}\label{35-e-10}
	A_n(x)G_{n,\omega}(r)\Big|_{r=0}=\big|\big||\cdot|^{\nu_n-(1-\delta_{1n})/2}w(|\cdot|)\big|\big|_{L_2(\Omega)}^2
	,\,\,\,
	A_n(x)G_{n,\omega}(r)\Big|_{r\geqslant1}=0.
\end{equation}
\end{theorem}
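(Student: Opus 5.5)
Starting from \eqref{35-05} with $\Lambda=1$, I will pass to spherical coordinates $y=s\hat y$, $z=t\hat z$. Since $\omega$ is radial, Fubini's theorem gives
\begin{equation*}
G_{n,\omega}(r)=S_{n-1}^2\int_0^{1/2}\!\mathrm{d}s\int_0^{1/2}\!\mathrm{d}t\,s^{n-1}t^{n-1}\omega(s)\omega(t)\,k_n(r,s,t),
\end{equation*}
which is \eqref{35-e-12}, once absolute convergence is checked. The conditions $\nu_n>0$ and $w\in L_1$ give $s^{n-1}\omega(s)=s^{(n-2+\delta_{1n})/2+\nu_n}w(s)\in L_1(\Omega)$. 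Because $k_n$ is bounded on the relevant set, at least for $n\ne2$ or away from $s=t=0$, and has only a logarithmic singularity for $n=2$, dominated convergence applies.

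\textbf{Statement 1.} By Theorem~\ref{35-t-1}(1), $k_n$ is continuous, decreasing in $r$, and positive for $n\geqslant3$. Integrating against the nonnegative weight then gives continuity, monotonicity and positivity of $G_{n,\omega}$. The maximum sits at $r=0$, and Theorem~\ref{35-t-1}(2) gives $k_n(0,s,t)=G_n(\max(s,t))$. Splitting the square into $t<s$ and $t>s$, the two halves are equal by symmetry, which yields \eqref{35-e-2}. For the limit $\nu_n\to+0$ I will argue by monotone convergence, since every quantity involved has a sign for $n\ne2$.

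\textbf{Statement 2.} I will integrate \eqref{35-e-2} by parts in $s$. Put $F(s)=\int_0^s t^{n-1}\omega(t)\,\mathrm{d}t$, so that $F(1/2)=S_{n-1}^{-1}$ by normalization and $s^{n-1}\omega(s)F(s)=\tfrac12(F^2)'$. Then
\begin{equation*}
\mathrm{I}_1=S_{n-1}^2\Big[G_n F^2\Big]_0^{1/2}-S_{n-1}^2\int_0^{1/2} G_n'(s)F(s)^2\,\mathrm{d}s ,
\end{equation*}
and with $G_n'(s)=-s^{1-n}/S_{n-1}$ this is exactly \eqref{35-e-3}. The step I expect to be the main obstacle is showing that the boundary term vanishes at $0$, i.e.\ $G_n(s)F(s)^2\to0$. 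I will use Cauchy--Schwarz or Hölder on $F$: since $t^{n-1}\omega(t)=t^{(n-2)/2+\nu_n}w(t)$, we get $F(s)\leqslant s^{(n-2)/2+\nu_n}\|w\|_{L_1(0,s)}$. Hence $F^2G_n\lesssim s^{2\nu_n}\,o(1)\to0$, and for $n=2$ the bound is $s^{2\nu}|\ln s|\to0$. For $n=1$ one has $G_1(s)=-s/2$, and the term vanishes even when $\nu_1=0$.

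\textbf{Statements 3 and 4.} I will differentiate under the integral using Theorem~\ref{35-t-1}(3),(4). By \eqref{35-23}, $\partial_rk_n=0$ for $r\leqslant|t-s|$, which gives $\partial_rG_{n,\omega}(0)=0$. For $r\geqslant1\geqslant s+t$ we have $k_n=G_n(r)$, which gives $-r^{1-n}/S_{n-1}$ after the normalization. Continuity needs a dominating bound for $\partial_rk_n$ near $r\approx|t-s|$, and this is where $\nu_n\geqslant1/2$ enters: $|\partial_r k_n|\lesssim r^{1-n}$, and the weight must compensate the singularity as $s,t\to0$.

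For the Laplacian, $A_nk_n$ is supported in $r\in(|t-s|,t+s)$. At $r=0$ this forces $s=t$, and the representation becomes a delta-like concentration, so I will instead compute $A_nG_{n,\omega}(r)=-r^{1-n}\partial_r\big(r^{n-1}\partial_rG_{n,\omega}\big)$ directly from \eqref{35-e-12}. Using the flux identity $r^{n-1}\partial_rk_n=-\tfrac{1}{S_{n-1}}\times$(fraction of the pair of spheres within radius $r$), the value at $r\to0$ reduces to $\int_0^{1/2}s^{2(n-1)}\omega(s)^2\,s^{1-n}\,\mathrm{d}s$. This equals $\|s^{\nu_n-1/2}w\|_{L_2}^2$, which is finite because $w\in L_2$. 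The value for $r\geqslant1$ is zero by \eqref{35-25}.
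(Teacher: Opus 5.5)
You handle the representation and points 1--2 essentially correctly. In point 2 you integrate by parts directly on $[\epsilon,1/2]$. You use that $F$ is absolutely continuous and that $F(s)\leqslant s^{n/2-1+\nu_n}\|w\|_{L_1(0,s)}$ ($n\geqslant2$) kills the boundary term. This is a legitimate, somewhat cleaner alternative to the paper's route, which integrates by parts for smooth $\omega$ and then uses $L_1$-density and continuity of $\mathrm{I}_1,\mathrm{I}_2$. Two small corrections:
\begin{itemize}
\item For $n\geqslant3$, $k_n$ is not bounded near $s=t=0$ at $r=0$, since $k_n(0,s,t)=G_n(\max(s,t))$. Absolute convergence of \eqref{35-e-12} should therefore come from Tonelli plus finiteness of $\mathrm{I}_1$.
\item For $\nu_n\to+0$, monotone convergence only identifies the limit. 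Its finiteness needs the paper's bound showing $s^{n+\delta_{1n}-2+2\nu_n}G_n(s)$ stays bounded for $n\neq2$.
\end{itemize}
The genuine gaps are in points 3 and 4.

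\textbf{Point 3.} The only bound you give is $|\partial_rk_n|\leqslant r^{1-n}/S_{n-1}$, from $0\leqslant h_n\leqslant1$. For $n\geqslant2$ it does give continuity of $\partial_rG_{n,\omega}$ on $(0,+\infty)$, already for $\nu_n>0$. It is useless as $r\to0$, however. There $\partial_rk_n(r,s,t)\neq0$ on the strip $|t-s|<r$, and the estimate it produces (for, say, bounded $w$) is of order $r^{2-n}$ times a constant: not small for $n=2$, divergent for $n\geqslant3$. So neither the interchange giving $\partial_rG_{n,\omega}(0)=0$ nor continuity at $0$ follows, and the hypothesis $\nu_n\geqslant1/2$ never actually enters your argument.

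The missing ingredient is the $r$-uniform majorant \eqref{35-e-6}, $|\partial_rk_n(r,s,t)|\leqslant C_n(s+t)^{1-n}/S_{n-1}$. The paper obtains it by rescaling $p\to p\varkappa$ in \eqref{35-e-5} and using $\varkappa/r^2\leqslant(t+s)^{-2}$ for $|t-s|\leqslant r\leqslant t+s$. Against $s^{n-1}t^{n-1}\omega(s)\omega(t)$ it gives $(st)^{n/2-1+\nu_n}(s+t)^{1-n}w(s)w(t)\leqslant 2^{1-n}(st)^{\nu_n-1/2}w(s)w(t)$, which is integrable precisely because $\nu_n\geqslant1/2$. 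Dominated convergence then gives both claims, since the pointwise limit vanishes for $s\neq t$. On the diagonal it does not vanish: for $r\leqslant 2s$, $k_3(r,s,s)=\frac{1}{4\pi s}-\frac{r}{16\pi s^2}$. So \eqref{35-23} at $r=0$ fails for $s=t$, and this null set really must be absorbed by a majorant.

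\textbf{Point 4.} You do not address continuity of $A_nG_{n,\omega}$ at all. It is not inherited from \eqref{35-e-12}, because for $n=2,3$ the function $A_nk_n(\cdot,s,t)$ is discontinuous at $r=|t\pm s|$. The value at $0$ is also asserted (``reduces to'') rather than derived.

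Your flux identity is correct. Pushed through, it says that $-S_{n-1}r^{n-1}\partial_rG_{n,\omega}(r)$ is the probability that $|y+z|<r$ for independent $y,z$ with density $\omega(|\cdot|)$, i.e.\ $A_nG_{n,\omega}=\omega*\omega$. This is exactly the paper's \eqref{35-e-8}, which follows in one line by applying $A_n$ to \eqref{35-05} and using $A_nG_n=\delta^n$. Since $\nu_n\geqslant1/2$ and $w\in L_2(\Omega)$ give $\omega(|\cdot|)\in L_2(\mathrm{B}_{1/2})$, the convolution of two $L_2$ functions is continuous and vanishes for $|x|\geqslant1$. At $x=0$ it equals $\|\omega(|\cdot|)\|^2_{L_2(\mathrm{B}_{1/2})}=S_{n-1}\int_0^{1/2}s^{n-1}\omega(s)^2\,\mathrm{d}s$.

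Your expression drops this factor $S_{n-1}$. So does the display \eqref{35-e-10}, whereas \eqref{35-e-11} and $\psi_\alpha$ in Section~\ref{sec:35:2} keep it. An honest version of your computation would therefore not reproduce the formula as written.
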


\begin{cor}\label{35-cor-1}
Let $w\in L_{1,+}(\Omega)$ from Theorem \ref{35-t-2}. Let there be such a number $1/2>\epsilon>0$, that $w(r)=0$ for almost all $r\in[0,\epsilon]$. Then the first three points of Theorem \ref{35-t-2} are true for all $\nu_n\in\mathbb{R}$. Besides, the following estimate is valid
\begin{equation}\label{35-h-1}
G_n(1/2)\leqslant G_{n,\omega}(0)\leqslant G_n(\epsilon).	
\end{equation}
\end{cor}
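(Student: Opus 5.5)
The idea is that the hypothesis $w=0$ a.e. on $[0,\epsilon]$ makes the factor $t^{-(n-\delta_{1n})/2+\nu_n}$ harmless. On the support $[\epsilon,1/2]$ it is bounded above and below by positive constants depending on $\epsilon,n,\nu_n$. Hence $\omega(t)=t^{-(n-\delta_{1n})/2+\nu_n}w(t)$ lies in $L_{1,+}(\Omega)$ and vanishes near zero, whatever the sign of $\nu_n$. I would therefore set $\tilde\nu=1$ (any value $\geqslant1/2$ would do) and $\tilde w(t)=t^{\nu_n-\tilde\nu}w(t)$. Then $\tilde w\in L_{1,+}(\Omega)$, $\tilde w$ also vanishes on $[0,\epsilon]$, and
\begin{equation*}
\omega(t)=t^{-(n-\delta_{1n})/2+\tilde\nu}\,\tilde w(t).
\end{equation*}
So the same kernel $\omega$ fits the hypotheses of Theorem \ref{35-t-2} with the admissible parameter $\tilde\nu$. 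Statements 1)--3) of Theorem \ref{35-t-2} concern only $G_{n,\omega}$, $G_{n,\omega}(0)$ and $\partial_rG_{n,\omega}$, which are determined by $\omega$ alone. They therefore transfer verbatim, and the first claim follows.

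The one point needing care is the formulas \eqref{35-e-2} and \eqref{35-e-3}. They are written in terms of $\omega$ and not of the pair $(\nu_n,w)$, so no reinterpretation is needed. I would remark this explicitly.

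For the estimate \eqref{35-h-1} I would use the representation \eqref{35-e-2}. Set $M(s)=S_{n-1}\int_0^s\mathrm{d}t\,t^{n-1}\omega(t)$. Then
\begin{equation*}
\mathrm{I}_1=2\int_0^{1/2}G_n(s)M'(s)M(s)\,\mathrm{d}s=\int_0^{1/2}G_n(s)\,\mathrm{d}(M^2)(s).
\end{equation*}
The measure $\mathrm{d}(M^2)$ is nonnegative, because $\omega\geqslant0$. By \eqref{35-04} it has total mass $M(1/2)^2=1$, and it is supported in $[\epsilon,1/2]$ since $\omega$ vanishes below $\epsilon$. Since $G_n$ is decreasing on $\mathbb{R}_+$, on this support $G_n(1/2)\leqslant G_n(s)\leqslant G_n(\epsilon)$. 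Integrating against a probability measure then gives \eqref{35-h-1}.

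For $n=1$, $G_1(r)=-r/2$ is decreasing as well, so the same argument works unchanged; the sphere measure becomes the counting measure on $\{\pm1\}$ and $S_0=2$. For $n=2$, $G_2$ is decreasing and the argument again goes through.

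I expect no real obstacle here. The only place to be careful is checking that $\tilde w\in L_1$: $t^{\nu_n-\tilde\nu}$ is bounded on $[\epsilon,1/2]$, so this holds. One should also check that the extreme case $\nu_n\to+0$ in statement 1) is not needed, since \eqref{35-e-12} holds directly with $\tilde\nu$.
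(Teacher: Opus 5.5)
Your proposal is correct. The first part, absorbing $t^{\nu_n-\tilde\nu}$ into $w$ (harmless because $w$ vanishes on $[0,\epsilon]$), is exactly the paper's argument.

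For the estimate \eqref{35-h-1} you take a different route. The paper works with the second representation \eqref{35-e-3}, $G_{n,\omega}(0)=G_n(1/2)+S_{n-1}\int_0^{1/2}s^{1-n}\big(\int_0^s t^{n-1}\omega\big)^2\mathrm{d}s$:

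\begin{itemize}
\item the lower bound comes from the nonnegativity of the squared term;
\item the upper bound comes from bounding the inner integral by $\|\omega\|_{L_1(\mathrm{B}_{1/2})}/S_{n-1}=1/S_{n-1}$ and using that it vanishes for $s<\epsilon$, so the excess is at most $S_{n-1}^{-1}\int_\epsilon^{1/2}s^{1-n}\mathrm{d}s=G_n(\epsilon)-G_n(1/2)$.
\end{itemize}

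You instead stay with $\mathrm{I}_1[w]$ from \eqref{35-e-2} and read it as $\int G_n\,\mathrm{d}(M^2)$. This is the expectation of the decreasing function $G_n$ evaluated at $\max(|y|,|z|)$, for two independent radii with density $\omega$, whose distribution function is $M^2$. The two formulas are Stieltjes integration-by-parts versions of each other: $\int G_n\,\mathrm{d}(M^2)=G_n(1/2)-\int M^2\,\mathrm{d}G_n$ recovers $\mathrm{I}_2[w]$.

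What your route buys:
\begin{itemize}
\item it needs only point 1 of Theorem~\ref{35-t-2}, so it bypasses the density/approximation argument behind $\mathrm{I}_1=\mathrm{I}_2$ in point 2;
\item it treats all $n$, including $n=1,2$, uniformly;
\item the two-sided bound is an immediate consequence of monotonicity of $G_n$ and total mass one.
\end{itemize}

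The paper's route has a different advantage. It exhibits $G_{n,\omega}(0)-G_n(1/2)$ explicitly as a nonnegative quadratic functional of $\omega$, and this is the form used in Section~\ref{sec:35:1} to identify the sphere kernel as the minimizer.
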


\section{Proofs}\label{35:sec:pr}

\noindent\textbf{Proof of Theorem \ref{35-t-1}.}
Let us first consider the case of $n>2$. Let us apply the Laplace operator $A_n(x)$ to the kernel $k_n(r,s,t)$, then we get
\begin{equation}\label{35-10}
A_n(x)k_n(r,s,t)=\int_{\mathrm{S}^{n-1}}\frac{\mathrm{d}^{n-1}\sigma(\hat{y})}{S_{n-1}}\int_{\mathrm{S}^{n-1}}\frac{\mathrm{d}^{n-1}\sigma(\hat{z})}{S_{n-1}}\,
\delta^{n}(x+s\hat{y}+t\hat{z}),
\end{equation}
where the equation  $A_n(x)G_n(|x|)=\delta^n(x)$ was used. Recall that $\delta^{n}(\cdot)$ is the corresponding $\delta$-functional on the class $\mathcal{S}(\mathbb{R}^n)$, see \cite{Gelfand-1964,Vladimirov-2002}. Note that this function has been studied before, see pages 381 and 404 in the monograph \cite{35-cit-1} and page 585 in \cite{35-cit-2}. Nevertheless, a short summary of the conclusion is useful from the point of view of general understanding. Let us use the relation for the internal integral, see formula (14) in \cite{Iv-2024},
\begin{equation}\label{35-11}
\int_{\mathrm{S}^{n-1}}\mathrm{d}^{n-1}\sigma(\hat{z})\,
\delta^{n}(x+s\hat{y}+t\hat{z})=t^{1-n}\delta^1(|x+s\hat{y}|-t),
\end{equation}
and let us move to hyperspherical coordinates, then writing out the explicit formula for $\mathrm{d}^{n-1}\sigma(\hat{z})$ using the angular coordinates and additionally calculating the integral over the $(n-2)$-dimensional hypersphere, we obtain a representation for the right side of \eqref{35-10} in the form
\begin{equation}\label{35-12}
\frac{t^{1-n}S_{n-2}}{S_{n-1}^2}
\int_0^\pi\mathrm{d}\phi\,\sin^{n-2}(\phi)\,
\delta^1\big(v(\phi)\big),
\end{equation}
where the auxiliary function was defined
\begin{equation}\label{35-13}
v(\phi)=\big(r^2+2rs\cos(\phi)+s^2\big)^{1/2}-t.
\end{equation}
The search for zeros for this function reduces to solving the equation
\begin{equation}\label{35-14}
\cos(\phi_c)=\frac{t^2-r^2-s^2}{2rs}.
\end{equation}
Considering the parameters $t$ and $s$ fixed, the equation has a nonzero solution in the interval $[0,\pi]$ only if the condition is fulfilled
\begin{equation}\label{35-15}
|t-s|\leqslant r\leqslant|t+s|.
\end{equation}
Thus, calculating the value of the first derivative
\begin{equation}\label{35-16}
\dot{v}(\phi_c)=-\frac{rs\sin(\phi_c)}{v(\phi_c)+t}=-\frac{rs}{t}\sqrt{1-\bigg(\frac{t^2-r^2-s^2}{2rs}\bigg)^2}
\end{equation}
and replacing the variables in the functional
\begin{equation}\label{35-17}
\delta^1\big(v(\phi)\big)=\frac{\delta^1(\phi-\phi_c)}{|\dot{v}(\phi_c)|},
\end{equation}
we obtain the following result for integral \eqref{35-10}
\begin{equation}\label{35-18}
\theta(r-|t-s|)\theta(|t+s|-r)
\frac{2S_{n-2}}{S_{n-1}^2}
(2tsr)^{2-n}
\bigg(\Big(r^2-(t-s)^2\Big)\Big((t+s)^2-r^2)\Big)\bigg)^{\frac{n-3}{2}}
,
\end{equation}
where $\theta(\cdot)$ denotes the Heaviside (step) function. Next, representing the Laplace operator as follows $-r^{1-n}\partial_rr^{n-1}\partial_r$ and integrating over the variable $r$, taking into account continuous gluing at points  $r=|t-s|$ and $r=|t+s|$ and equalities $k_n(r,s,t)=G_n(r)$ for $r>|t+s|$, we obtain formula \eqref{35-01}. Note that, knowing the answer, we can simply check the fulfillment of all the properties and conditions presented in the formulation of the theorem. Indeed, $k_n(r,s,t)$ is continuous, since the function from \eqref{35-03} satisfies the conditions
\begin{equation}\label{35-19}
g_n(|t-s|,s,t)=G_n(\mathrm{max}(t,s))-G_n(t+s),\,\,\,
g_n(|t+s|,s,t)=0.
\end{equation}
The second follows from the direct substitution of $r=t+s$, while the first is obtained in several steps. To check, select $r=|t-s|$ in \eqref{35-03} and replace the variable with $u^2=4tsv+(t-s)^2$, then we get
\begin{equation*}
\frac{2^{n-1}S_{n-2}ts}{S_{n-1}^2}
\int_{0}^{1}\mathrm{d}v\,\Big(4tsv+(t-s)^2\Big)^{-\frac{n}{2}}\int_{0}^{v}\mathrm{d}p\,p^{\frac{n-3}{2}}(1-p)^{\frac{n-3}{2}}.
\end{equation*}
Next, we change the order of integration and explicitly calculate the inner integral. When calculating, for convenience, we can assume that $s\neq t$, understanding the situation $s=t$, which was studied in \cite{Iv-2024}, as limit case. We have
\begin{equation*}
-\frac{2^{n-2}S_{n-2}}{S_{n-1}^2}\frac{|t+s|^{2-n}}{n-2}\mathrm{B}\bigg(\frac{n-1}{2},\frac{n-1}{2}\bigg)+
\frac{2^{n-2}S_{n-2}|t-s|^{2-n}}{S_{n-1}^2(n-2)}\mathrm{P}_n(t,s),
\end{equation*}
where $\mathrm{B}(\,\cdot\,,\,\cdot\,)$ is the Euler's beta-function and $\mathrm{P}_n(t,s)$ is defined by the equality
\begin{equation*}
P_n(t,s)=
\int_{0}^{1}\mathrm{d}p\,p^{\frac{n-3}{2}}(1-p)^{\frac{n-3}{2}}
\bigg(1-p\bigg(-\frac{4ts}{(t-s)^2}\bigg)\bigg)^{\frac{2-n}{2}}.
\end{equation*}
Then we use the equalities, see paragraphs 8.383, 9.121, and 9.131 in \cite{GR},
\begin{equation*}
\frac{2^{n-2}S_{n-2}}{S_{n-1}}\mathrm{B}\bigg(\frac{n-1}{2},\frac{n-1}{2}\bigg)=1,
\end{equation*}
\begin{equation*}
\mathrm{P}_n(t,s)=\frac{\Gamma^2\big((n-1)/2\big)}{\Gamma(n-1)}
{}_{2}\mathrm{F}_{1}\bigg(\frac{n-2}{2},\frac{n-1}{2},n-1;-\frac{4ts}{(t-s)^2}\bigg),
\end{equation*}
\begin{equation*}
{}_{2}\mathrm{F}_{1}\bigg(\frac{n-2}{2},\frac{n-1}{2},n-1;-z\bigg)=2^{n-2}
\Big(1+\sqrt{1-z}\Big)^{2-n},
\end{equation*}
\begin{equation*}
|t-s|+t+s=2\max(t,s),
\end{equation*}
where $z>0$. Finally, after substituting the mentioned relations and reducing identical terms, we come to the first relation from \eqref{35-19}. Thus, the continuity of $k_n(r,s,t)$ over the variable $r$ is shown. In this case, the function is strictly positive. Calculate the first derivative for \eqref{35-01}
\begin{equation}\label{35-e-4}
\partial_r k_n(r,s,t)=-\frac{r^{1-n}}{S_{n-1}}
\begin{cases}
	0,	& r<|t-s|;  \\
	h_n(r,s,t),
	& |t-s|\leqslant r\leqslant|t+s|; \\
	1,	& r>|t+s|,
\end{cases}
\end{equation}
where
\begin{equation}\label{35-e-5}
h_n(r;s,t)=\frac{2^{n-2}S_{n-2}}{S_{n-1}}
\int_0^{(r^2-(t-s)^2)/(4ts)}\mathrm{d}p\,p^{\frac{n-3}{2}}(1-p)^{\frac{n-3}{2}}.
\end{equation}
Let us verify the continuity of the function. For $r=|t-s|$ the equality $h_n(|t-s|;s,t)=0$ is valid. For $r=t+s$, using the above mentioned relations, we obtain in a similar way $h_n(t+s;s,t)=1$. Therefore, the first derivative is a continuous non-positive decreasing function. At the same time, it is zero only when $r\leqslant|t-s|$. Next, applying the operator $-r^{1-n}\partial_rr^{n-1}$, we make sure that it is consistent with the presentation \eqref{35-18}. Note that the symmetry of the functions $k_n(r,s,t)$ and $A_n(x)k_n(r,s,t)$ follows from the fact that they depend on the absolute value of $|x|$. Indeed, in this case, we can add averaging over the variable $\hat{x}=x/r$, which lead to a symmetric form in formulas \eqref{35-20} and \eqref{35-10}. Thus, taking into account the mentioned part of the proof, we obtain properties 1)--4) for $n>2$. Following similar steps, we verify the validity of relation \eqref{35-01} for $n=1,2$ and the corresponding properties. The theorem is proved.\\

\noindent\textbf{Proof of Theorem \ref{35-t-2}.} \\

\noindent\underline{The first point.} Let $t\in\Omega$ and $\omega(t)=t^{-(n-\delta_{1n})/2+\nu_n}w(t)$, where $w\in L_{1,+}(\Omega)$ and $\nu_n\geqslant0$. Using property 1 from Theorem \ref{35-t-1} and the non-negativity of the kernel $\omega(\cdot)$ for almost all values of the argument, we obtain decreasing of the function $G_{n,\omega}(\cdot)$ on $\mathbb{R}_+$. In this case, the lower boundedness for $n\geqslant3$ follows from the boundedness of the function $k_n$. Next, using property 2 of Theorem \ref{35-t-1}, we obtain explicit formula \eqref{35-e-2} for $G_{n,\omega}(0)$. In this case, the integral is finite, since the estimates are valid
\begin{equation*}
\int_0^s\mathrm{d}t\,t^{n-1}\omega(t)=
\int_0^s\mathrm{d}t\,t^{(n+\delta_{1n})/2-1+\nu_n}w(t)\leqslant s^{(n+\delta_{1n})/2-1+\nu_n}||w||_{L_1},
\end{equation*}
\begin{equation*}
\Big|\mathrm{I}_1[w]\Big|\leqslant2S_{n-1}^2||w||_{L_1}^2\max_{s\in[0,1/2]}\Big(s^{n+\delta_{1n}-2+2\nu_n}G_n(s)\Big).
\end{equation*}
At the same time, if $n\neq2$, then the transition $\nu_n\to +0$ is valid, since the function in parentheses remains continuous at $[0,1/2]$. The first point is proved.\\

\noindent\underline{The second point.} Note that if $\omega\in C_c^{\infty}(\Omega)$ then in $\mathrm{I}_1[w]$ we can integrate in parts, eventually getting the stated answer \eqref{35-e-3}. In this case, the restriction of $\nu_n>0$ on the parameter leads to the fact that the function $t^{2\nu_n-1}$ belongs to $L_1(\Omega)$. Turning to generalization, let us use the density of the set $C_c^{\infty}(\Omega)$ in $L_1(\Omega)$, see Corollary 4.23 in \cite{35-hm}. Let the sequence $\{w_i\in C_c^{\infty}(\Omega)\}_{i=1}^{+\infty}$ tend to $w$ in $L_1(\Omega)$, that is, for any $\varepsilon>0$ there is such a number $N\in\mathbb{N}$, that $||w_i-w||_{L_1}<\varepsilon$ for all $i>N$. Moreover, we can additionally assume that $||w_i||_{L_1}<(1+i^{-1})||w||_{L_1}$. In this case, the estimates are valid for $n\geqslant3$ and $i>N$
\begin{equation*}
\big|\mathrm{I}_1[w]-\mathrm{I}_2[w]\big|\leqslant
\big|\mathrm{I}_1[w]-\mathrm{I}_1[w_i]\big|+
\big|\mathrm{I}_1[w_i]-\mathrm{I}_2[w]\big|=
\big|\mathrm{I}_1[w]-\mathrm{I}_1[w_i]\big|+
\big|\mathrm{I}_2[w_i]-\mathrm{I}_2[w]\big|,
\end{equation*}
\begin{equation*}
\big|\mathrm{I}_1[w]-\mathrm{I}_1[w_i]\big|\leqslant\frac{2S_{n-1}}{n-2}\big(||w||_{L_1}+||w_i||_{L_1}\big)||w_i-w||_{L_1}\leqslant\varepsilon
\frac{6 S_{n-1}}{n-2}||w||_{L_1},
\end{equation*}
\begin{equation*}
\big|\mathrm{I}_2[w]-\mathrm{I}_2[w_i]\big|\leqslant 2^{-2\nu_n}\frac{S_{n-1}}{2\nu_n}\big(||w||_{L_1}+||w_i||_{L_1}\big)||w_i-w||_{L_1}\leqslant\varepsilon\frac{3S_{n-1}}{2\nu_n}||w||_{L_1}.
\end{equation*}
Hence, we get
\begin{equation*}
	\big|\mathrm{I}_1[w]-\mathrm{I}_2[w]\big|\leqslant
3\varepsilon||w||_{L_1}S_{n-1}\frac{4\nu_n-2+n}{2(n-2)\nu_n}.
\end{equation*}
Since the left part does not depend on $\varepsilon$ and the index $i$, we obtain the desired equality of integrals. The cases of $n=1,2$ are treated similarly. In this case, for $n=1$, the case of $\nu_1=0$ is valid as well. The second point is proved.\\

\noindent\underline{The third point.} Let $n\geqslant2$. Let us consider the first derivative $\partial_rG_{n,\omega}(r)$. We immediately note the equality $\partial_rG_{n,\omega}(r)=\partial_rG_{n}(r)$ in the region $r\geqslant1$, which follows from the explicit formula \eqref{35-01}. Next, let us use representation \eqref{35-e-4} and study the function $\hat{h}_n(r)=r^{1-n}h_n(r,s,t)$ in more detail with fixed $(s,t)\in\overline{\Omega}\times\overline{\Omega}\setminus\{u\times u:u\in\Omega\}$ in the interval $|t-s|\leqslant r\leqslant t+s$. To do this, first, in integral \eqref{35-e-5}, we scale the variable $p\to p\varkappa$, where $\varkappa=(r^2-(t-s)^2))/(4ts)$, then we get
\begin{equation*}
\hat{h}_n(r;s,t)=\frac{2^{n-2}S_{n-2}}{S_{n-1}}
(\varkappa/r^2)^{\frac{n-1}{2}}
\int_0^1\mathrm{d}p\,p^{\frac{n-3}{2}}(1-p\varkappa)^{\frac{n-3}{2}}.
\end{equation*}
Note that for the specified range of values of $r$, the parameter $\varkappa\in[0,1]$. Therefore, the estimates are valid
\begin{equation*}
\varkappa/r^2=\frac{1}{4ts}\bigg(1-\frac{(t-s)^2}{r^2}\bigg)\leqslant\frac{1}{(t+s)^2},\,\,\,
1-p\leqslant1-p\varkappa\leqslant1.
\end{equation*}
Next, consider the case of $n\geqslant3$. Using the both upper bounds, we obtain
\begin{equation*}
	\hat{h}_n(r;s,t)\leqslant\frac{2^{n-2}S_{n-2}}{S_{n-1}(t+s)^{n-1}}
	\int_0^1\mathrm{d}p\,p^{\frac{n-3}{2}}\leqslant\frac{2^{n-2}S_{n-2}}{S_{n-1}(t+s)^{n-1}}.
\end{equation*}
In the case $n=2$, the lower bound should be used for the integral expression, then we get
\begin{equation*}
	\hat{h}_2(r;s,t)\leqslant\frac{S_{0}}{S_{1}(t+s)}
	\int_0^1\mathrm{d}p\,p^{-\frac{1}{2}}(1-p)^{-\frac{1}{2}}=
	\frac{S_{0}\pi}{S_{1}(t+s)}.
\end{equation*}
Thus, defining the auxiliary value
\begin{equation*}
C_n=\frac{\pi2^{n-2}S_{n-2}}{S_{n-1}}\geqslant1,
\end{equation*}
and noticing that in the range $r\in[t+s,1]$ the inequality $C_n|t+s|^{1-n}\geqslant r^{1-n}$ holds, we get an estimate of the form
\begin{equation}\label{35-e-6}
	|\partial_r k_n(r,s,t)|\leqslant\frac{C_n}{S_{n-1}}\Bigg(
	\begin{cases}
		0,	& |t-s|>r\geqslant0;  \\
		|t+s|^{1-n},
		& |t-s|\leqslant r<1,
	\end{cases}\Bigg)\leqslant\frac{C_n|t+s|^{1-n}}{S_{n-1}}.
\end{equation}
Using the second inequality and integrating, for $r\in[0,1)$, we obtain an inequality of the form
\begin{equation}\label{35-e-7}
\Big|\partial_rG_{n,\omega}(r)\Big|\leqslant C_nS_{n-1}\int_0^{1/2}\mathrm{d}s\int_0^{1/2}\mathrm{d}t\,
\frac{(st)^{n/2-1+\nu_n}}{(s+t)^{n-1}}w(s)w(t)\leqslant 2^{(1-n)/2}C_nS_{n-1}||w||_{L_1}^2,
\end{equation}
where the inequalities $2st<(s+t)^2$ and $(st)^{\nu_n-1/2}\leqslant1$ were used. Thus, the function $\partial_rG_{n,\omega}(r)$ is bounded and continuous for $\nu_n\geqslant1/2$. Let us show that it is zero at $r=0$. Note that when $r\to+0$, after integrating the first inequality from \eqref{35-e-6}, the estimate follows
\begin{align*}
\Big|\partial_rG_{n,\omega}(r)\Big|&\leqslant 2C_nS_{n-1}\bigg(\int_r^{1/2}\int_{s-r}^{s}+\int_0^{r}\int_{0}^{s}\bigg)\mathrm{d}s\mathrm{d}t\,
\frac{(st)^{n/2-1+\nu_n}}{(s+t)^{n-1}}w(s)w(t)\\
&\leqslant 2^{(5-n)/2}C_nS_{n-1}||w||_{L_1}\max_{s\in[r,1/2]}\bigg(\int_{s-r}^{s}\mathrm{d}t\,w(t)\bigg),
\end{align*}
the right-hand side of which tends to zero when $r\to+0$. Hence, we get $\partial_rG_{n,\omega}(r)\big|_{r=0}=0$. Similarly, taking into account explicit form \eqref{35-03}, the case $n=1$ can be analyzed. The third point is proved.\\

\noindent\underline{The fourth point.} Let us move on to the consideration of $A_n(x)G_{n,\omega}(r)$. In this case, it is convenient to use the representation of $\eqref{35-05}$, then for $\omega(|\cdot|)\in L_2(\mathrm{B}_{1/2})$, where $\mathrm{B}_{1/2}=\{x\in\mathbb{R}^n:|x|<1/2\}$, the equality holds
\begin{equation}\label{35-e-8}
A_n(x)G_{n,\omega}(x)=\int_{\mathbb{R}^n}\mathrm{d}^ny\,\omega(|x-y|)\omega(|y|),
\end{equation}
In this case, $L_2(\mathrm{B}_{1/2})$ is understood as a natural subset of $L_2(\mathbb{R}^n)$, that is, $\omega(t)=0$ for all $t>1/2$. Next, we observe that the norm in $L_2(\mathrm{B}_{1/2})$, taking into account the spherical symmetry of the function, can be rewritten as follows
\begin{align}\label{35-e-11}
\big|\big|\omega(|\cdot|)\big|\big|_{L_2(\mathrm{B}_{1/2})}^2&=S_{n-1}\int_0^{1/2}\mathrm{d}s\,
\Big(s^{\nu-(1-\delta_{1n})/2}w(s)\Big)^2\\\nonumber&=S_{n-1}
\big|\big||\cdot|^{\nu-(1-\delta_{1n})/2}w(|\cdot|)\big|\big|_{L_2(\Omega)}^2.
\end{align}
Then, taking into account the restrictions for the supports, we get the properties from \eqref{35-e-10}. Consequently, the fourth point has also been verified, and the theorem is completely proved.\\

\noindent\textbf{Proof of Corollary \ref{35-cor-1}.} Since $\omega(r)=0$ for almost all $r\in[0,\epsilon]$, then $t^a\omega(t)$ belongs to the class $L_{1,+}(\Omega)$. Therefore, by choosing the parameter $a$ appropriately, it is possible to satisfy the conditions of the first three points of Theorem \ref{35-t-2}. Relation \eqref{35-h-1} follows from the application of formula \eqref{35-e-3} using an estimate of the form
\begin{equation*}
S_{n-1}\int_0^{1/2}\mathrm{d}s\,s^{1-n}\bigg(\int_0^s\mathrm{d}t\,t^{n-1}\omega(t)\bigg)^2\leqslant
\frac{||\omega||_{L_1(\mathrm{B}_{1/2})}^2}{S_{n-1}}
\int_\epsilon^{1/2}\mathrm{d}s\,s^{1-n}=G_n(\epsilon)-G_n(1/2),
\end{equation*}
where definition \eqref{35-02} and condition \eqref{35-04} were taken into account. Corollary \ref{35-cor-1} is proved.

\section{New examples}\label{35:sec:sp}

\subsection{The case $t=s=1/2$.}
\label{sec:35:1}

Let the notations from Theorem \ref{35-t-1} be true. Substitute the parameter values into formulas \eqref{35-01} and \eqref{35-03}, then we get
\begin{equation}\label{35-22}
	k_n(r,1/2,1/2)=
	\begin{cases}
		G_n(1)+g_n(r,1/2,1/2),
		&r\leqslant1; \\
		G_n(r),	& r>1,
	\end{cases}
\end{equation}
where
\begin{equation}\label{35-27}
	g_n(r,1/2,1/2)=
	\frac{2^{n-2}S_{n-2}}{S_{n-1}^2}
	\int_r^{1}\mathrm{d}u\,u^{1-n}\int_0^{u^2}\mathrm{d}p\,p^{\frac{n-3}{2}}(1-p)^{\frac{n-3}{2}}
\end{equation}
for $n>1$ and $g_1(r,1/2,1/2)=(r-1)/2$. This choice of parameters corresponds to the kernel \eqref{35-04} of a special type $\omega(|x|)\to\hat{\omega}(|x|)=2^{n-1}\delta^1(|x|-1/2+\epsilon)/S_{n-1}$ when $\epsilon\to+0$, which should be understood in the sense of generalized functions, see \cite{Gelfand-1964,Vladimirov-2002}. Normalization can be checked by direct substitution and switching to spherical coordinates. Here, the auxiliary regularization with the parameter $\epsilon$ reveals the ambiguity at the boundary of the domain, since integration occurs over $|x|\in[0,1/2]$. Note that this choice of kernel does not fit into Theorem \ref{35-t-2}, since the generalized function $\hat{\omega}(|\cdot|)$ is not contained in the set $L_{1,+}(\Omega)$. Nevertheless, such a variant exists and can be obtained as a limit of functions from $L_{1,+}(\Omega)$. Indeed, let us choose\footnote{Note that this is not the only choice. As an example, we can take any $\delta$-shaped sequence on the segment $[0,1/2]$ converging to the functional $\hat{\omega}(|x|)$.} a sequence of non-negative functions of the form 
\begin{equation}\label{35-40}
\hat{\omega}_k(r)=\frac{r^{1-n}}{S_{n-1}}
\begin{cases}
	0,
	&0\leqslant r<1/2-1/k; \\
	k,	& 1/2-1/k\leqslant r\leqslant1/2;\\
	0,
	&1/2<r. 
\end{cases}
\end{equation}
It is clear that $\hat{\omega}_k(|\cdot|)\to\hat{\omega}(|\cdot|)$ for $k\to+\infty$ in the sense of generalized functions on $C(\mathbb{R}_+)$, and, moreover, the normalization is valid
\begin{equation*}
\int_{\mathbb{R}^n}\mathrm{d}^nx\,\hat{\omega}_k(r)=1.
\end{equation*}
This example shows that the class of kernels is broader than the one described in Theorem \ref{35-t-2}. However, the found representative of a broader class can be obtained by appropriate limit transition.
Thus, we get
\begin{equation}\label{35-28}
G_{n,\hat{\omega}}(|x|)=\int_{\mathbb{R}_+}\mathrm{d}s\int_{\mathbb{R}_+}\mathrm{d}t\,s^{n-1}t^{n-1}2^{2n-4}
\delta^1(t-1/2)\delta^1(s-1/2)k_n(r,s,t)
=k_n(r,1/2,1/2)
\end{equation}
and
\begin{equation}\label{35-29}
\mathbf{f}_n(|x|^2)=G_{n,\hat{\omega}}(|x|)-\begin{cases}
	G_n(1),
	&r\leqslant1; \\
	G_n(r),	& r>1,
\end{cases}=g_n(r,1/2,1/2)\theta(1-r).
\end{equation}
Such an example of regularization has been discussed in recent papers. In particular, see \cite{sk-b-20}, it was shown that $\mathbf{f}_n(|x|^2)$ can be represented by a finite sum of elementary functions for any $n\in\mathbb{N}$. Explicit examples for $n\in\{3,\ldots,6\}$ are given in Corollary 1 of \cite{Iv-2024}. In conclusion, we note that the kernel $\hat{\omega}(|\cdot|)$ gives the minimum value for the regularized function $G_{n,\omega}(\cdot)$ from \eqref{35-e-3} at zero, which can be achieved, for example, by limit transition. Indeed, let $\omega=\hat{\omega}_k$, then
\begin{equation*}
G_{n,\hat{\omega}_k}(0)=G_{n}(1/2)+
\frac{1}{S_{n-1}}\int_{1/2-1/k}^{1/2}\mathrm{d}s\,s^{1-n}\Big(s-1/2+1/k\Big)^2
\xrightarrow{k\to+\infty}
G_{n,\hat{\omega}}(0)=G_{n}(1/2).
\end{equation*}
Intuitively, this result is highly expected. Indeed, the function $G_n(\cdot)$ is decreasing in the specified dimensions. Therefore, the minimum value of the averaged function should be achieved on kernels whose density is concentrated at the upper boundary of the segment $[0,1/2]$. This is exactly the option that is achieved by averaging over the sphere.

\subsection{The case $n=3$.}
\label{sec:35:2}

Substitute the index value, then $S_2=4\pi$, $S_1=2\pi$. In this case, formula \eqref{35-03} can be written out explicitly, but it is not convenient to calculate the integral with such a function, so we use \eqref{35-18}, then we get
\begin{equation*}
A_3(x)k_3(r,s,t)=\theta(r-|t-s|)\theta(|t+s|-r)
\frac{1}{8\pi tsr},
\end{equation*}
and, as a consequence, we get the formula
\begin{equation*}
A_3(x)G_{3,\omega}(r)
=16\pi^2\int_0^{1/2}\mathrm{d}s\int_0^{1/2}\mathrm{d}t\,s^2\omega(s)t^2\omega(t)A_3(x)k_3(r,s,t).
\end{equation*}
Note that for each fixed $r>0$, the integration domain is defined by the following conditions
\begin{equation*}
|t-s|\leqslant r,\,\,\,|t+s|\geqslant r,\,\,\,s,t\in[0,1/2],
\end{equation*}
therefore, the integral can be represented as follows
\begin{equation*}
A_3(x)G_{3,\omega}(r)=\frac{1}{4\pi}
	\begin{cases}
		A_3(x)R_1(r),	& r<1/2;  \\
		A_3(x)R_2(r),
		& 1/2\leqslant r\leqslant1; \\
		0,	& r>1,
	\end{cases}
\end{equation*}
where
\begin{equation*}
A_3(x)R_1(r)=\frac{8\pi^2}{r}\bigg(\int_{0}^{1/2}\int_{0}^{1/2}-2\int_{r}^{1/2}\int_{0}^{s-r}-\int_{0}^{r}\int_{0}^{r-s}\bigg)\mathrm{d}s\mathrm{d}t\,s\omega(s)t\omega(t),
\end{equation*}
\begin{equation*}
A_3(x)R_2(r)
=\frac{8\pi^2}{r}
\int_{r-1/2}^{1/2}\int_{r-s}^{1/2}\mathrm{d}s\mathrm{d}t\,s\omega(s)t\omega(t).
\end{equation*}
For example, consider a kernel of the form $\omega_\alpha(t)=\rho_\alpha e^{\alpha t}/(2\pi t)$, where $\alpha\in\mathbb{R}$ and $\rho_\alpha$ is a normalization constant, see \eqref{35-04}, and is equal to
\begin{equation*}
\rho_\alpha=\frac{\alpha^2}{(\alpha-2)e^{\alpha/2}+2}.
\end{equation*}
Note that the denominator turns to zero only at $\alpha=0$. Moreover, the asymptotic expansion near zero has the form $\alpha^2(1+o(1))/4$, and, therefore, the function $\rho_\alpha>0$ for all $\alpha\in\mathbb{R}$. 
Substituting the function $\omega_\alpha(\cdot)$ into the integrals, we obtain
\begin{align*}
A_3(x)R_{\alpha,1}(r)&=\frac{2\rho_\alpha^2}{\alpha^2r}\bigg[e^\alpha-e^{\alpha(1-r)}-\alpha re^{\alpha r}\bigg],\\
A_3(x)R_{\alpha,2}(r)&=\frac{2\rho_\alpha^2}{\alpha^2r}\bigg[e^\alpha-e^{\alpha r}-\alpha(1-r)e^{\alpha r}\bigg].
\end{align*}
Further, integrating and taking into account the continuity conditions
\begin{align*}
R_{\alpha,1}(r)\big|_{r=1/2-0}&=R_{\alpha,2}(r)\big|_{r=1/2+0},
\\
\partial_rR_{\alpha,1}(r)\big|_{r=1/2-0}&=\partial_rR_{\alpha,2}(r)\big|_{r=1/2+0},
\\
R_{\alpha,2}(r)\big|_{r=1-0}&=r^{-1}\big|_{r=1+0}=1,
\\
\partial_rR_{\alpha,2}(r)\big|_{r=1-0}&=\partial_rr^{-1}\big|_{r=1+0}=-1,
\end{align*}
we get an answer in the form
\begin{equation*}
G_{3,\omega_\alpha}(r)=\frac{1}{4\pi}
	\begin{cases}
		R_{\alpha,1}(r),	& r<1/2;  \\
		R_{\alpha,2}(r),
		& 1/2\leqslant r\leqslant1; \\
		r^{-1},	& r>1,
	\end{cases}
\end{equation*}
where auxiliary functions are defined as follows
\begin{align*}
R_{\alpha,1}(r)&=1+\frac{2\rho_\alpha^2}{\alpha^2}\bigg[\frac{e^\alpha}{2}(1-r)+
%\int_r^{1/2}\frac{\mathrm{d}u}{\alpha^2 u^2}\Big(e^{\alpha(1-u)}(1+\alpha u)-e^{\alpha}\Big)
\frac{e^\alpha}{\alpha^2}\bigg(2-2e^{-\alpha/2}-\frac{1-e^{-\alpha r}}{r}\bigg)
\\&
~~~~~~~~~~~~~~~~~~~~~~~~~~~~~-
%\int_r^{1/2}\frac{\mathrm{d}u}{\alpha^2 u^2}\Big(e^{\alpha u}\big((1-\alpha u)^2+1\big)-2\Big)
\frac{2}{\alpha^2}\bigg(2-2e^{\alpha/2}-\frac{1-e^{\alpha r}}{r}\bigg)
-\frac{1}{\alpha}\Big(e^{\alpha/2}-e^{\alpha r}\Big)
\\&
~~~~~~~~~~~~~~~~~~~~~~~~~~~~~+
\frac{(3+\alpha)}{\alpha^2}
\bigg(-1-e^{\alpha}+2e^{\alpha/2}\bigg)
+
\frac{1}{\alpha}\Big(e^{\alpha}-e^{\alpha/2}\Big)\\&
~~~~~~~~~~~~~~~~~~~~~~~~~~~~~-
\frac{1}{\alpha^2}\Big(e^\alpha-e^{\alpha/2}(2\alpha-4)-\alpha-5)\Big)
%\int_{1/2}^1\mathrm{d}u\,e^{\alpha u}
\bigg]
,\\
R_{\alpha,2}(r)&=1+\frac{2\rho_\alpha^2}{\alpha^2}\bigg[\frac{e^\alpha}{2}(1-r)+
\frac{(3+\alpha)}{\alpha^2}
\bigg(1-e^{\alpha}-\frac{1-e^{\alpha r}}{r}\bigg)
+
\frac{1}{\alpha}\Big(e^{\alpha}-e^{\alpha r}\Big)\\&
~~~~~~~~~~~~~~~~~~~~~~~~~~~~~+
\frac{1-r^{-1}}{\alpha^2}\Big(e^\alpha-e^{\alpha/2}(2\alpha-4)-\alpha-5)\Big)\bigg].
\end{align*}
The following equalities are of particular interest
\begin{align*}
||\omega||_{L_1}&=1,
\\
G_{3,\omega_\alpha}(0)&=\frac{\alpha}{2\pi}\frac{e^{\alpha}(\alpha-3)+4e^{\alpha/2}-1}{(e^{\alpha/2}(\alpha-2)+2)^2}\equiv\varphi_\alpha,
\\
A_3G_{3,\omega_\alpha}\big|_{r=0}&=\frac{\alpha^3}{2\pi}\frac{e^{\alpha}-1}{(e^{\alpha/2}(\alpha-2)+2)^2}=||\omega||_{L_2}\equiv\psi_\alpha.
\end{align*}
Let us take a closer look at properties of $\varphi_\alpha$ and $\psi_\alpha$. The function $\varphi_\alpha$ is strictly decreasing on $\mathbb{R}$. At the same time, we have
\begin{equation*}
\varphi_\alpha\big|_{\alpha\to-\infty}=+\infty
,\,\,\,
\varphi_\alpha\big|_{\alpha\to+\infty}=\frac{1}{2\pi}=G_3(1/2).
\end{equation*}
The function $\psi_\alpha>0$, for all values, has unique point $\alpha_c\approx3.72
>0$, in which the first derivative $\dot{\psi}_{\alpha_c}=0$ is equal to zero, strictly decreases on $(-\infty,\alpha_c)$, and strictly increases on $(\alpha_c,+\infty)$. At the same time, the limits have the form
\begin{equation*}
\psi_\alpha\big|_{\alpha\to\pm\infty}=+\infty.
\end{equation*}
This behavior is easily interpretable. Indeed, in the limit $\alpha\to-\infty$, the density of $\omega_\alpha(t)$ concentrates around $t=0$, which corresponds to the removal of regularization and, as a result, the transition $G_{3,\omega_\alpha}\to G_3$ to a function with a singularity at zero. In turn, the limit of $\alpha\to+\infty$ corresponds to a density concentration near $t=1/2$. Indeed, for every fixed $t\in[0,1/2)$ the kernel of the averaging operator decreases exponentially at $\alpha\to+\infty$ according to the rule
\begin{equation*}
\omega_\alpha(t)=\alpha e^{\alpha(t-1/2)}\big(1+o(1)\big),
\end{equation*}
while at the point $t=1/2$, the function $\omega_\alpha(1/2)$ increases proportionally to $\alpha$ in the main order, thereby leading to a $\delta$-shaped sequence, since the normalization remains fixed.
This situation symbolizes the transition to the $\delta$-functional, which leads to the minimum of $G_{3,\omega_\alpha}(0)$. However, the $\delta$-functional is not quadratically integrable, so this regularization is not applicable for all models equally. As an example of the application of the limiting case, we can give a sextic model, see \cite{Kh-2024,Kh-25}.

\subsection{The case $n=2$.}
\label{sec:35:3}
Substitute the index value, then $S_1=2\pi$, $S_0=2$, and formula \eqref{35-03} can be rewritten as
\begin{equation}\label{35-31}
g_2(r,s,t)=\frac{1}{2\pi^2}
\int_r^{t+s}\mathrm{d}u\,u^{-1}\int_0^{(u^2-(t-s)^2)/(4ts)}\mathrm{d}p\,p^{-1/2}(1-p)^{-1/2}.
\end{equation}
Next, using the value for the incomplete beta function $\mathrm{B}_r(1/2,1/2)=2\arcsin(\sqrt{r})$, see Paragraph 8.391 and formula 13 in Paragraph 9.121 of \cite{GR}, we get
\begin{equation}\label{35-32}
g_2(r,s,t)=\frac{1}{\pi^2}
\int_r^{t+s}\mathrm{d}u\,u^{-1}\arcsin\Big(\sqrt{(u^2-(t-s)^2)/(4ts)}\Big).
\end{equation}
Calculating the integral with such a density is a separate task, and, as in the case of $n=3$, only some special quantities are of particular interest. As a part of the study of a two-dimensional non-linear sigma model \cite{AIK-25}, the following values appeared
\begin{equation}\label{35-f-9}
\theta_j=\int_{\mathbb{R}^2}\mathrm{d}^2k\,|k|^{-2}\Big(\upsilon^{2j+2}(|k|)-\upsilon^{2j}(|k|)\Big),
\end{equation}
where $j\geqslant1$ and the auxiliary function is defined by the equality
\begin{equation*}
\upsilon(|k|)=\int_{\mathbb{R}^2}\mathrm{d}^2x\,e^{ikx}\omega(|x|).
\end{equation*}
Here, the density has properties from Theorem \ref{35-t-2} . Let us study a special case. To do this, we calculate the Fourier transform for the sharp cutoff function in the momentum representation, which, after the scale transformation from \eqref{35-06}, is the characteristic function of the ball $\mathrm{B}_1$,
\begin{equation}\label{35-e-1}
\frac{1}{2\pi}\int_{\mathrm{B}_1}\mathrm{d}^2k\,e^{ikx}=
\frac{1}{2\pi}\int_0^1\mathrm{d}t
\int_0^{2\pi}\mathrm{d}\phi\,te^{ist\cos(\phi)}=
\frac{J_1(s)}{s}
,
\end{equation}
where $|x|=s$, $|k|=t$, and $J_i(\cdot)$ is the Bessel function of the first kind. Thus, if the density of $\omega$ corresponded to the cutoff in the momentum representation, then it would be proportional to \eqref{35-e-1}. However, such a function has a non-compact support and, therefore, is not suitable as an averaging kernel for a cutoff in the coordinate representation. Let us cut it additionally manually. We define new kernel of the averaging operator by the equality
\begin{equation}\label{35-n-1}
\omega(s)=\frac{\alpha}{2\pi \kappa_\alpha}\frac{J_1(\alpha s)}{s}\chi(s<1/2),
\end{equation}
where $\alpha>0$, $\chi(\cdot)$ is the characteristic function of a set of $\mathbb{R}_+$, satisfying the equality, and also $\kappa_\alpha=1-J_0(\alpha/2)$ is a normalization factor, see \eqref{35-04}. It is clear that the kernel from \eqref{35-n-1} leads to a mixed regularization combining the properties of cutoffs both in the coordinate representation and in the momentum representation. Therefore, after substitution, we get
\begin{equation*}
\upsilon(\alpha t)=\kappa_\alpha^{-1}\int_0^{\alpha/2}\mathrm{d}s\,J_0(st)J_1(s)
\equiv\hat{\upsilon}(\alpha,t).
\end{equation*}
Function \eqref{35-f-9} with the new kernel will then be labeled with the index $\alpha$. Then, after scaling the integration variable, we get
\begin{equation*}
\theta_{j}^\alpha=2\pi\int_{\mathbb{R}_+}\frac{\mathrm{d}t}{t}\Big(\hat{\upsilon}^{2j+2}(\alpha,t)-\hat{\upsilon}^{2j}(\alpha,t)\Big).
\end{equation*}
We show that during in the limit $\alpha\to+\infty$, the values $\theta_j^\alpha$ tend to zero for all $j\in\mathbb{N}$. This fact can be proved in several steps. Let us define three auxiliary integrals for this
\begin{align*}
\theta_{j,1}^\alpha&=2\pi\int_1^{+\infty}\frac{\mathrm{d}t}{t}\Big(\hat{\upsilon}^{2j+2}(\alpha,t)-\hat{\upsilon}^{2j}(\alpha,t)\Big),\\
\theta_{j,2}^\alpha&=2\pi\int_{1/2}^{1}\frac{\mathrm{d}t}{t}\Big(\hat{\upsilon}^{2j+2}(\alpha,t)-\hat{\upsilon}^{2j}(\alpha,t)\Big),\\
\theta_{j,3}^\alpha&=2\pi\int_0^{1/2}\frac{\mathrm{d}t}{t}\Big(\hat{\upsilon}^{2j+2}(\alpha,t)-\hat{\upsilon}^{2j}(\alpha,t)\Big).
\end{align*}
It is clear that $\theta_{j}^\alpha=\theta_{j,1}^\alpha+\theta_{j,2}^\alpha+\theta_{j,3}^\alpha$. Moreover, each component, as well as their sum, can be estimated according to the following lemma.
\begin{lemma}\label{35-l-1}
For each fixed $j\in\mathbb{N}$, there are such numbers $\Theta_{j,i}>0$, where $i\in\{1,2,3\}$, and such a $N>1$ that for all $\alpha>N$ the inequalities are fulfilled
\begin{equation*}
|\theta_{j,1}^\alpha|\leqslant\frac{\Theta_{j,1}\ln(\alpha)}{\alpha},\,\,\,
|\theta_{j,2}^\alpha|\leqslant\frac{\Theta_{j,2}}{\alpha^{1/2}},\,\,\,
|\theta_{j,3}^\alpha|\leqslant\frac{\Theta_{j,3}\ln(\alpha)}{\alpha^{1/2}},\,\,\,
|\theta_{j}^\alpha|\leqslant\Big(\Theta_{j,1}+\Theta_{j,2}+\Theta_{j,3}\Big)\frac{\ln(\alpha)}{\alpha^{1/2}}.
\end{equation*}
\end{lemma}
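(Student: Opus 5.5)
The plan is to get a single closed representation of $\hat{\upsilon}(\alpha,t)$ that is uniform in $t$, and then estimate the three pieces separately. Put $A=\alpha/2$. For the full integral I use the Weber--Schafheitlin discontinuous integral
\[
\int_0^{+\infty}\mathrm{d}s\,J_0(st)J_1(s)=\chi(t<1)
\]
(with value $1/2$ at $t=1$). This gives
\[
\kappa_\alpha\hat{\upsilon}(\alpha,t)=\chi(t<1)-T_\alpha(t),\qquad T_\alpha(t)=\int_A^{+\infty}\mathrm{d}s\,J_0(st)J_1(s).
\]
I then integrate $T_\alpha$ by parts, using $J_1=-J_0'$:
\[
T_\alpha(t)=J_0(A)J_0(At)-t\int_A^{+\infty}\mathrm{d}s\,J_0(s)J_1(st).
\]
The remaining integral I treat with the large-argument asymptotics $J_\nu(x)=\sqrt{2/(\pi x)}\big(\cos(x-\nu\pi/2-\pi/4)+O(x^{-1})\big)$. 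The product then splits into oscillatory terms with frequencies $1\pm t$, plus absolutely integrable remainders of order $s^{-2}t^{-1/2}$. A further integration by parts in the oscillatory terms gives, for $At\gtrsim1$,
\[
\Big|t\int_A^{+\infty}J_0(s)J_1(st)\,\mathrm{d}s\Big|\leqslant\frac{C\sqrt t}{\alpha}\Big(1+\frac{1}{|1-t|}\Big).
\]
This must be combined with the trivial bound $|T_\alpha|\leqslant C$ near $t=1$. For $At\lesssim1$ I use $|J_1(st)|\leqslant st$ directly instead. Separately, $\kappa_\alpha=1-J_0(A)$ gives $\kappa_\alpha^{-1}=1+O(\alpha^{-1/2})$, and $|J_0(A)|\leqslant C\alpha^{-1/2}$.

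With this in hand I use the elementary bound $|\hat{\upsilon}^{2j+2}-\hat{\upsilon}^{2j}|\leqslant C_j\min(|\hat{\upsilon}|^2,|\hat{\upsilon}^2-1|)$, which holds since $\hat{\upsilon}$ is uniformly bounded. For $\theta^\alpha_{j,1}$ ($t>1$), $|\hat{\upsilon}|\leqslant C|T_\alpha|$. Away from $t=1$ this is $\leqslant C\alpha^{-1/2}|J_0(At)|+C\sqrt t/(\alpha|1-t|)\leqslant C/(\alpha\sqrt t\,|1-t|)$. Squaring and integrating against $\mathrm{d}t/t$ over $[1+1/\alpha,\infty)$ gives $O(1/\alpha)$. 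Near $t=1$, where the factor $(1-t)^{-1}$ is not integrable, I use instead the bound $\min\big(C,(\alpha|t-1|)^{-1}\big)$ to the first power. Integrating this over $[1,2]$ produces the $\ln(\alpha)/\alpha$. For $\theta^\alpha_{j,2}$ ($t\in[1/2,1]$), I write $\hat{\upsilon}^2-1=(\kappa_\alpha^{-2}-1)+O(|T_\alpha|)$. The first term is $O(\alpha^{-1/2})$ and the second contributes $O(\ln\alpha/\alpha)$ by the same argument, so the total is $\leqslant\Theta_{j,2}\alpha^{-1/2}$.

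The main obstacle is $\theta^\alpha_{j,3}$, because of the weight $\mathrm{d}t/t$ at $t\to0$. Here the normalization is exactly what saves us: $\hat{\upsilon}(\alpha,0)=1$. Indeed, from the formula above,
\[
\kappa_\alpha(\hat{\upsilon}-1)=J_0(A)\big(1-J_0(At)\big)+t\int_A^{\infty}J_0(s)J_1(st)\,\mathrm{d}s.
\]
I bound $|1-J_0(At)|\leqslant\min(2,\alpha t)$, and bound the second term by $C\sqrt t/\alpha$ (using $|J_1(st)|\leqslant st$ together with the oscillation of $J_0$ when $At\lesssim1$). This yields
\[
|\hat{\upsilon}^2-1|\leqslant C\big(\alpha^{-1/2}\min(1,\alpha t)+\sqrt t/\alpha\big).
\]
Integrating against $\mathrm{d}t/t$ on $(0,1/2]$ gives $C\alpha^{-1/2}(1+\ln\alpha)+C\alpha^{-1}$, which is the claimed $\Theta_{j,3}\ln(\alpha)/\alpha^{1/2}$. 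The hardest technical point is to make the small-$t$ estimate of $t\int_A^\infty J_0(s)J_1(st)\,\mathrm{d}s$ uniform across the crossover $At\sim1$. I would handle it by splitting the $s$-integral at $s=1/t$.

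Finally, summing the three bounds and using $\ln(\alpha)/\alpha\leqslant\ln(\alpha)/\alpha^{1/2}$ and $1\leqslant\ln\alpha$ for $\alpha>N\geqslant e$ gives the last inequality of Lemma \ref{35-l-1}.
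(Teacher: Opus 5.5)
Your architecture matches the paper's. You write $\kappa_\alpha\hat{\upsilon}=\chi(t<1)-T_\alpha$ via Weber--Schafheitlin and use the Bessel asymptotics. For $\theta^\alpha_{j,2}$ you use $|1-\kappa_\alpha|=|J_0(\alpha/2)|=O(\alpha^{-1/2})$. For $\theta^\alpha_{j,3}$ you combine the integration by parts with $\int_0^{1/2}\frac{\mathrm{d}t}{t}|1-J_0(\alpha t/2)|=O(\ln\alpha)$, exactly as in Step 5. The gap is the large-$t$ part of $\theta^\alpha_{j,1}$, which does not close.

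You use the integrated-by-parts form $T_\alpha=J_0(A)J_0(At)-t\int_A^{+\infty}J_0(s)J_1(st)\,\mathrm{d}s$ (with $A=\alpha/2$) for all $t$, and the prefactor $t$ destroys the decay. The $O(s^{-3/2})$ remainder of $J_0(s)$ times the leading term of $J_1(st)$ is $O(s^{-2}t^{-1/2})$. Bounding it absolutely gives $t\cdot t^{-1/2}\int_A^{+\infty}s^{-2}\,\mathrm{d}s\sim\sqrt{t}/\alpha$, which is the ``$1+$'' in your own bound $\frac{C\sqrt t}{\alpha}\big(1+\frac{1}{|1-t|}\big)$. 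Your next inequality, $C\alpha^{-1/2}|J_0(At)|+C\sqrt t/(\alpha|1-t|)\leqslant C/(\alpha\sqrt t\,|1-t|)$, silently drops that term. It is also false by a factor of order $t$ as $t\to+\infty$. What your estimates actually give is $|\hat{\upsilon}(\alpha,t)|\lesssim\sqrt t/\alpha$ for large $t$. Then $\int^{+\infty}\frac{\mathrm{d}t}{t}\min\big(C,t/\alpha^2\big)$ diverges, as does its linear analogue, so you do not obtain $\theta^\alpha_{j,1}=O(\ln(\alpha)/\alpha)$.

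The fix is the paper's Step 1: for $t\geqslant1/2$, do not integrate by parts. Insert the asymptotics of $J_1(s)$ and $J_0(st)$ directly into $T_\alpha(t)=\int_A^{+\infty}J_0(st)J_1(s)\,\mathrm{d}s$. The remainders are then $O\big(\alpha^{-1}(t^{-1/2}+t^{-3/2})\big)$. The leading term is a combination of $\mathrm{si}(\alpha|1-t|/2)$ and $\mathrm{ci}(\alpha(1+t)/2)$ divided by $\pi\sqrt t$. Hence $|T_\alpha(t)|\lesssim1/(\alpha\sqrt t)$ for $t\geqslant2$, which is integrable against $\mathrm{d}t/t$ even to the first power, while $[1,2]$ gives your $\ln(\alpha)/\alpha$. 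Alternatively, keep your form but use the next term of the Hankel expansion of $J_0(s)$, whose oscillation gains a factor $1/(\alpha|t\pm1|)$. Either way, reserve the integrated-by-parts form for $t\leqslant1/2$, which is where the paper uses it.

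In that small-$t$ region, your claimed bound $C\sqrt t/\alpha$ is more than your sketch delivers. The remainder of $J_1(st)$ only gives $O(1/(\alpha\sqrt t))$ for $\alpha t\gtrsim1$. Splitting at $s=1/t$ with absolute bounds gives $O(\sqrt t)$ for $\alpha t\lesssim1$. Both weaker bounds still integrate against $\mathrm{d}t/t$ to $O(\alpha^{-1/2})$. So the crossover $\alpha t\sim1$ that you flag as the hardest point is not actually an obstacle, and $\theta^\alpha_{j,3}$ goes through.
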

\noindent The proof of this statement consists of five steps. Let us describe them in more detail.\\

\noindent\underline{Step 1.} Let us first show that there exist such $N>0$ and $C>0$ that for all $\alpha>N$ and $t\geqslant1/2$ the estimate is valid
\begin{equation}\label{35-f-4}
|\hat{\upsilon}(+\infty,t)-\kappa_{\alpha}\hat{\upsilon}(\alpha,t)|\leqslant 
\frac{1}{\pi\sqrt{t}}\bigg(\frac{\pi C}{\alpha}+
\big|\mathrm{si}(\alpha|1-t|/2)\big|+\big|\mathrm{ci}(\alpha(1+t)/2)\big|\bigg),
\end{equation}
where definitions were used for the integral cosine functions $\mathrm{si}(\cdot)$ and integral sine $\mathrm{ci}(\cdot)$, see Paragraph 8.230 in \cite{GR}. In particular, using the boundedness  of special functions on the interval under consideration, it can be argued that there exists such a $V_1>0$ that
\begin{equation}\label{35-f-5}
	|\hat{\upsilon}(+\infty,t)-\kappa_{\alpha}\hat{\upsilon}(\alpha,t)|\leqslant V_1.
\end{equation}
To derive the relations, we use formula 2 of Paragraph 6.512 from \cite{GR}
\begin{equation*}
\int_{\mathbb{R}_+}\mathrm{d}s\,J_0(st)J_1(s)=
\begin{cases}
	1,
	&t\in[0,1); \\
	1/2,	& t=1;\\
	0,
	&t>1. 
\end{cases}
=\hat{\upsilon}(+\infty,t),
\end{equation*}
where the value $\kappa_\alpha|_{\alpha\to+\infty}=1$ was additionally used in the last equation. Then we get the equality
\begin{equation}\label{35-f-1}
\hat{\upsilon}(+\infty,t)-\kappa_{\alpha}\hat{\upsilon}(\alpha,t)=
\int_{\alpha/2}^{+\infty}\mathrm{d}s\,J_0(st)J_1(s).
\end{equation}
Next, we apply formula 1 of Paragraph 8.451 from \cite{GR}, according to which there exist such $N_1>1$ and $C_1>0$ that for all $r>N_1$ and $l\in\{0,1\}$ an equation of the form holds
\begin{equation}\label{35-f-3}
J_l(r)-\sqrt{2/\pi r}\cos(r-\pi/4-l\pi/2)=\phi_l(r)r^{-3/2},
\end{equation}
where the absolute value of the continuous function $\phi_l(r)$ for all $r>N_1$ and $l\in\{0,1\}$ is bounded by the number $C_1$. For convenience, we will also assume that for all $r>N_1$, the inequality $|J_0(r)|\leqslant1/\sqrt{r}$ holds.
In this case, applying equality to $J_1(s)$, the difference from \eqref{35-f-1} for all $t\in\mathbb{R}_+$ and $\alpha>2N_1$ can be written out as
\begin{equation}\label{35-f-2}
\hat{\upsilon}(+\infty,t)-\kappa_{\alpha}\hat{\upsilon}(\alpha,t)=
\sqrt{\frac{2}{\pi}}
\int_{\alpha/2}^{+\infty}\frac{\mathrm{d}s}{\sqrt{s}}\,J_0(st)\cos(s-3\pi/4)+
\int_{\alpha/2}^{+\infty}\frac{\mathrm{d}s}{s^{3/2}}\,J_0(st)\phi_1(s).
\end{equation}
Let us strengthen the constraints on the parameter $\alpha$. Further, let $\alpha>4N_1$, then the ratio from \eqref{35-f-3} can be applied to $J_0(st)$. Then the right-hand side of the last equation is transformed to the sum of four terms
\begin{align*}
\hat{\upsilon}(+\infty,t)-\kappa_{\alpha}\hat{\upsilon}(\alpha,t)=&
\frac{2}{\pi\sqrt{t}}
\int_{\alpha/2}^{+\infty}\frac{\mathrm{d}s}{s}\,\cos(st-\pi/4)\cos(s-3\pi/4)+
\frac{1}{t^{3/2}}\int_{\alpha/2}^{+\infty}\frac{\mathrm{d}s}{s^3}\,\phi_0(st)\phi_1(s)\\
&\sqrt{\frac{2}{\pi t^3}}
\int_{\alpha/2}^{+\infty}\frac{\mathrm{d}s}{s^2}\,\phi_0(st)\cos(s-3\pi/4)+\sqrt{\frac{2}{\pi t}}
\int_{\alpha/2}^{+\infty}\frac{\mathrm{d}s}{s^2}\,\phi_1(s)\cos(st-\pi/4).
\end{align*}
Note that there is such a $C>0$ that the absolute value of the sum of the last three terms after evaluation modulo integral expressions does not exceed the value of $C/(\alpha\sqrt{t})$. Applying boundedness to the functions, we note that the following combination can be chosen as such a constant
\begin{equation}\label{35-f-10}
C=C_1^2/N_1^2+6C_1\sqrt{2/\pi}.
\end{equation}
Additionally transforming the product of cosines
\begin{equation*}
2\cos(st-\pi/4)\cos(s-3\pi/4)=\sin(s(1-t))-\cos(s(1+t))
\end{equation*}
and using definitions for the functions $\mathrm{si}(\cdot)$ and $\mathrm{ci}(\cdot)$, we get the stated estimate \eqref{35-f-4}, taking into account the found numbers $N=4N_1$ and $C$ from \eqref{35-f-10}.\\

\noindent\underline{Step 2.} Consider the integral for $\theta_{j,1}^\alpha$. Let the constraints on the $\alpha$ parameter stated in the previous step be true. Note that the absolute value of the integral expression, taking into account \eqref{35-f-5} and the equality $\hat{\upsilon}(+\infty,t)=0$ for $t>1$, is estimated as follows
\begin{align*}
2\pi\int_{1}^{+\infty}\frac{\mathrm{d}t}{t}\,\big|\hat{\upsilon}^{2j+2}(\alpha,t)-\hat{\upsilon}^{2j}(\alpha,t)\big|&\leqslant \kappa_\alpha^{-2j-2}V_1^{2j-1}(V_1+1)^2\times\\
&\times 2\int_{1}^{+\infty}\frac{\mathrm{d}t}{t^{3/2}}
\bigg(\frac{\pi C}{\alpha}+
\big|\mathrm{si}(\alpha|1-t|/2)\big|+\big|\mathrm{ci}(\alpha(1+t)/2)\big|\bigg).
\end{align*}
The first part of the integral can be calculated explicitly. The integral with the function $\mathrm{si}(\cdot)$ must first be divided into two parts by regions, for example, $[1,3]$ and $[3,+\infty]$, and then take advantage of the fact, see Paragraph 8.235 in \cite{GR}, that there are such $N_2>0$ and $C_2>0$, that the following estimates are correct for all $r>N_2$
\begin{equation*}
\big|\mathrm{ci}(r)\big|\leqslant C_2/r\,\,\,\mbox{and}\,\,\,\big|\mathrm{si}(r)\big|\leqslant C_2/r.
\end{equation*}
The third part of the integral is calculated without additional division. Next, we assume that in addition, $\alpha>N_2$ is executed. Then the following inequalities are true
\begin{equation*}
\int_{3}^{+\infty}\frac{\mathrm{d}t}{t^{3/2}}
\big|\mathrm{si}(\alpha|1-t|/2)\big|\leqslant\frac{2C_2}{\alpha}
\int_{2}^{+\infty}\frac{\mathrm{d}t}{t(t+1)^{3/2}}\leqslant\frac{C_2}{\alpha},
\end{equation*}
\begin{equation*}
\int_{1}^{3}\frac{\mathrm{d}t}{t^{3/2}}
\big|\mathrm{si}(\alpha|1-t|/2)\big|\leqslant\frac{2}{\alpha}
\int_{0}^{\alpha}\mathrm{d}t\,
\big|\mathrm{si}(t)\big|\leqslant\frac{2}{\alpha}\Big(N_2+C_2\ln(\alpha/N_2)\Big),
\end{equation*}
\begin{equation*}
\int_{1}^{+\infty}\frac{\mathrm{d}t}{t^{3/2}}
\big|\mathrm{ci}(\alpha|1+t|/2)\big|\leqslant\frac{2C_2}{\alpha}
\int_{2}^{+\infty}\frac{\mathrm{d}t}{t(t-1)^{3/2}}\leqslant\frac{C_2}{\alpha}.
\end{equation*}
Finally, collecting all the inequalities, we come to the conclusion that there exists such a $\Theta_{j,1}>0$, depending only on the index of $j$, that for $\alpha>\max\{4N_1,N_2\}$ an estimate of the form holds
\begin{equation*}
|\theta_{j,1}^\alpha|\leqslant\frac{\Theta_{j,1}\ln(\alpha)}{\alpha}.
\end{equation*}
\noindent\underline{Step 3.} Consider the integral for $\theta_{j,2}^\alpha$. In this case, it is necessary to take into account the fact that $\hat{\upsilon}(+\infty,t)=1$ for $t\in[0,1)$. Using the maximum value of the functions, we can verify the validity of the inequality
\begin{align*}
	2\pi\int_{1/2}^{1}\frac{\mathrm{d}t}{t}\,\big|\hat{\upsilon}^{2j+2}(\alpha,t)-&\hat{\upsilon}^{2j}(\alpha,t)\big|\leqslant \kappa_\alpha^{-2j-2}(V_1+1)^{2j}(V_1+1+\kappa_\alpha)\times\\
	&\times 2\int_{1/2}^{1}\frac{\mathrm{d}t}{t^{3/2}}
	\bigg(\frac{\pi C}{\alpha}+
	\big|\mathrm{si}(\alpha|1-t|/2)\big|+\big|\mathrm{ci}(\alpha(1+t)/2)\big|+\big|1-\kappa_\alpha\big|\bigg).
\end{align*}
Considering the intergal, it is clear that the first three parts can be studied in the same way as it was performed in the previous step. The fourth term is proportional to $|1-\kappa_\alpha|=|J_0(\alpha/2)|$ and therefore behaves like $\alpha^{-1/2}$, taking into account the expansion form \eqref{35-f-3}. Further, summing up all the relations, we come to the conclusion that there exists such a $\Theta_{j,2}>0$, depending only on the index of $j$, that for $\alpha>\max\{4N_1,4N_2\}$ an estimate of the form is performed
\begin{equation*}
	|\theta_{j,2}^\alpha|\leqslant\frac{\Theta_{j,2}}{\alpha^{1/2}}.
\end{equation*}
During the derivation, it was taken into account that $\alpha^{-1/2}$ tends to zero more slowly than $\ln(\alpha)/\alpha$, when $\alpha\to+\infty$.\\

\noindent\underline{Step 4.} Let us show the boundedness of the difference from \eqref{35-f-1} in the range $t\in[0,1/2]$. To do this, we use decomposition \eqref{35-f-2} and the auxiliary representation for the Bessel function, see formula 1 in Paragraph 8.411 of \cite{GR}, in the form
\begin{equation}\label{35-f-7}
J_m(st)=\frac{1}{\pi}\int_0^\pi\mathrm{d}\theta\,\cos(m\theta-st\sin(\theta)).
\end{equation}
Here, the index $m\in\mathbb{N}\cup\{0\}$. Decomposing the cosine product
\begin{equation*}
-2\cos(st\sin(\theta))\cos(s-3\pi/4)=\cos(s(1-t\sin(\theta))+\pi/4)+\cos(s(1+t\sin(\theta))+\pi/4)
\end{equation*}
and moving on to integration over $[-\pi,\pi]$, we come to the inequality
\begin{equation*}
\Big|\hat{\upsilon}(+\infty,t)-\kappa_{\alpha}\hat{\upsilon}(\alpha,t)\Big|\leqslant
\frac{2^{-1/2}}{\pi^{3/2}}\int_{-\pi}^\pi\mathrm{d}\theta\,\bigg|
\int_{\alpha/2}^{+\infty}\frac{\mathrm{d}s}{\sqrt{s}}\,\cos(s(1+t\sin(\theta))+\pi/4)\bigg|+\frac{4C_1}{\alpha^{1/2}}.
\end{equation*}
Notice that the replacement of the integration order was possible due to the fact that the integral expression always contains an oscillating exponent, since $t|\sin(\theta)|\leqslant1/2<1$ is executed for all values of $\theta\in[-\pi,\pi]$ and $0\leqslant t\leqslant1/2$. Indeed, we note that the integral 
\begin{equation}\label{35-r-1}
	\int_{\alpha/2}^{+\infty}\frac{\mathrm{d}s}{\sqrt{s}}\,\cos(s(1+t\sin(\theta))+\pi/4)=
	\frac{1}{\sqrt{\beta}}
	\Big(S\big(\sqrt{\alpha\beta}\big)-C\big(\sqrt{\alpha\beta}\big)\Big),
\end{equation}
where $\beta=(1+t\sin(\theta))/\pi$, after scaling of the variable can be calculated explicitly and expressed in terms of the Fresnel integrals $C(\cdot)$ and $S(\cdot)$, see Paragraph 8.250 in \cite{GR}. From this representation, it can be seen that the integral from \eqref{35-r-1} converges uniformly with respect to the parameter $\theta$ in the interval $[-\pi,\pi]$, and, therefore, the formula for replacing the order of integration is valid, see Theorem 4 of Paragraph 521 in \cite{34-f-111}. Taking into account additionally the fact that the Fresnel integrals are bounded functions on the real axis, we obtain the final statement. There exists a $V_2>0$ such that for all $t\in[0,1/2]$ and $\alpha>\max\{4N_1,4N_2\}$ the inequality holds
\begin{equation*}
\Big|\hat{\upsilon}(+\infty,t)-\kappa_{\alpha}\hat{\upsilon}(\alpha,t)\Big|\leqslant V_2.
\end{equation*}

\noindent\underline{Step 5.} Consider the integral for $\theta_{j,3}^\alpha$. Using the equality $\hat{\upsilon}(+\infty,t)=1$ for $t\in[0,1)$ and using the maximum value of the functions, we can verify the validity of the inequality
\begin{align*}
	2\pi\int_{0}^{1/2}\frac{\mathrm{d}t}{t}\,\big|\hat{\upsilon}^{2j+2}(\alpha,t)-\hat{\upsilon}^{2j}(\alpha,t)\big|\leqslant 2&\pi\kappa_\alpha^{-2j-2}(V_2+1)^{2j}(V_2+1+\kappa_\alpha)\times\\
	&\times \int_{0}^{1/2}\frac{\mathrm{d}t}{t}\,
	\bigg|
	\int_{\alpha/2}^{+\infty}\mathrm{d}s\,J_0(st)J_1(s)-J_0(\alpha/2)
	\bigg|.
\end{align*}
Let us integrate it in parts, then the density inside the module is transformed to the form
\begin{equation}\label{35-f-6}
J_0(\alpha/2)\big(J_0(\alpha t/2)-1\big)-t\int_{\alpha/2}^{+\infty}\mathrm{d}s\,J_1(st)J_0(s).
\end{equation}
It is clear that to complete the proof, it is sufficient to consider only two integrals. In the case of the first term, the main estimate follows from the following chain of inequalities
\begin{equation*}
\int_{0}^{1/2}\frac{\mathrm{d}t}{t}\,\Big|\big(J_0(\alpha t/2)-1\big)\Big|\leqslant
\int_{0}^{\alpha/4}\frac{\mathrm{d}t}{t}\,\Big|J_0(t)-1\Big|
\leqslant
\bigg(\int_{0}^{1}\frac{\mathrm{d}t}{t}\,\Big|J_0(t)-1\Big|+2\ln(\alpha/4)
\bigg)\leqslant2\ln(\alpha),
\end{equation*}
which are true for all $\alpha\geqslant4$.
We assume that $\alpha$ satisfies the conditions from the previous steps. Considering that $|J_0(\alpha/2)|$ is dominated by $2^{1/2}\alpha^{-1/2}$, we see that the first term in \eqref{35-f-6} leads to the dependence $\ln(\alpha)/\sqrt{\alpha}$. Next, consider the integral of the second term. Using \eqref{35-f-3}, we have
\begin{equation}\label{35-f-8}
\bigg|\int_{\alpha/2}^{+\infty}\mathrm{d}s\,J_1(st)J_0(s)\bigg|\leqslant\sqrt{\frac{2}{\pi}}\,
\bigg|\int_{\alpha/2}^{+\infty}\frac{\mathrm{d}s}{\sqrt{s}}\,J_1(st)\cos(s-\pi/4)\bigg|+
\frac{4C_1}{\sqrt{\alpha}}.
\end{equation}
Then using representation \eqref{35-f-7} for the Bessel function and performing transformations similar to those in step 4, we make sure that the integral on the right side of \eqref{35-f-8} has an estimate of the form
\begin{equation*}
\sqrt{\frac{2}{\pi}}\,\bigg|\int_{\alpha/2}^{+\infty}\frac{\mathrm{d}s}{\sqrt{s}}\,J_1(st)\cos(s-\pi/4)\bigg|\leqslant\max_{r\geqslant\sqrt{\alpha/(2\pi)}}\Big(
\big|S(r)-C(r)\big|+
\big|S(r)+C(r)-1\big|\Big).
\end{equation*}
With the help of properties of Fresnel integrals for large values of the argument, we we see that there is a behavior of the form $\alpha^{-1/2}$. Indeed, there are such $N_3>0$ and $C_3>0$ that for all $r>\sqrt{N_3/(2\pi)}$ the relations are fulfilled
\begin{equation*}
\big|C(r)-1/2\big|\leqslant C_3/r,\,\,\,
\big|S(r)-1/2\big|\leqslant C_3/r,
\end{equation*}
where the formulas of Paragraph 8.255 from \cite{GR} were used. Thus, collecting all the inequalities, we make sure that under the conditions imposed on $\alpha$, that is, under $\alpha>\max\{4N_1,4N_2,N_3\}$, there exists such a $\Theta_{j,3}>0$, depending only on the index of $j$, for which the inequality holds
\begin{equation*}
	|\theta_{j,3}^\alpha|\leqslant\frac{\Theta_{j,3}\ln(\alpha)}{\alpha^{1/2}}.
\end{equation*}

\section{Conclusion}
\label{35:sec:zak}

The averaging operator was considered in the context of problems related to the regularization of fundamental solutions. Several statements were proved, and three special cases that previously arose when studying the structure of singularities in quantum field models were examined. In particular, in Theorem \ref{35-t-1} , representations were obtained for fundamental solutions in a flat Euclidean space of arbitrary dimension, averaged twice over spheres of non-fixed radius, and the properties of smoothness and areas of monotonicity were studied. In turn, Theorem \ref{35-t-2} presented acceptable classes of kernels for averaging operators, depending on additional conditions.

Section \ref{35:sec:sp} dealt with special cases. The first of them concerns the situation when averaging over a ball turns into averaging over a sphere by concentrating density at the boundary. This leads to extreme values of the deformed fundamental solution at zero. The second case is related to a three-dimensional situation that occurs in the sextic model. In this case, a family of kernels and corresponding smoothed fundamental solutions were explicitly presented. The third option is devoted to the two-dimensional case that arises in the model of the principal chiral field. It was shown that the coefficients of renormalization (functionals of a special kind) in the case of using sharp cutoff in the momentum representation can also be obtained by limit transition from a cutoff in the coordinate representation.\\

\textbf{About other examples.} One of the interesting open problems is the study of special functionals for a mixed type of regularization (by analogy with Section \ref{sec:35:3}) in the four-dimensional case. Sets of such functionals are given in the appendix of \cite{Iv-2024-1}, devoted to the study of three-loop corrections.

\textbf{About extending the kernel class.} In this paper, we studied the case exclusively with probabilistic averaging, that is, when the kernel density is almost everywhere non-negative. Such a statement is more meaningful from a physical point of view. Nevertheless, mathematically, this limitation can be dispensed with. In this case, it would be interesting to study the properties (for example, monotonicity intervals) of the function $\mathbf{f}(\cdot)$, which may differ significantly due to negative values of the kernels.

\textbf{About the inverse problem.} In the course of the work, the properties of the function $\mathbf{f}(\cdot)$ were studied, which was explicitly built using the kernel $\omega(\cdot)$ of the averaging operator. In this regard, a fair question arises: "Under what conditions can the kernel be restored using the function $\mathbf{f}(\cdot)$, and what properties should be fulfilled?". The answer to this question is directly related to the positive definiteness of functions and Boas--Kac roots, see for example \cite{35-cc-1,35-cc-2}. This issue was not covered in this paper. Nevertheless, it is important and deserves attention. \\

\noindent\textbf{Acknowledgments.} A.V. Ivanov thanks the Ministry of Science and Higher Education of the Russian Federation, agreement number 075-15-2025-013, for their support, and expresses gratitude to N.V.Kharuk for useful comments on the sextic model, as well as to L.A. and K.A. for creating a positive working environment.

\vspace{2mm}
\noindent\textbf{Data availability statement.} Data exchange is not applicable to this article because no data sets have been generated or analyzed during the current study.

\vspace{2mm}
\noindent\textbf{Conflict of interest statement.} The authors claim that there is no conflict of interest.

\end{document}